\begin{document}

\newtheorem{theorem}{Theorem}
\newtheorem{innercustomthm}{Theorem}
\newenvironment{customthm}[1]
  {\renewcommand\theinnercustomthm{#1}\innercustomthm}
  {\endinnercustomthm}
\newtheorem{proposition}{Proposition}
\newtheorem{corollary}{Corollary}
\newtheorem{innercustomcor}{Corollary}
\newenvironment{customcor}[1]
  {\renewcommand\theinnercustomcor{#1}\innercustomcor}
  {\endinnercustomcor}
\newtheorem{lemma}{Lemma}
\newtheorem{claim}{Claim}
\newtheorem{definition}{Definition}
\newtheorem*{remark}{Remark}
\newcommand{\diu}{dynamic interval union}
\newcommand{\bps}{batch partial sum}
\newcommand{\es}{evenly-spreading}
\newcommand{\mi}{multi-index}
\newcommand{\mm}{model $\mathcal{M}$}
\newcommand{\E}{\ensuremath{\mathop{\mathbb{E}}}}
\newcommand{\rev}{\ensuremath{\mathrm{rev}}}

\title{Cell-probe Lower Bounds for Dynamic Problems via a New Communication Model}
\author{Huacheng Yu}
\affil{Stanford University}
\date{}

\maketitle

\begin{abstract}
In this paper, we develop a new communication model to prove a data structure lower bound for the \diu{} problem. The problem is to maintain a multiset of intervals $\mathcal{I}$ over $[0, n]$ with integer coordinates, supporting the following operations:
\begin{itemize}
	\item
		\verb+insert(a, b)+: add an interval $[a, b]$ to $\mathcal{I}$, provided that $a$ and $b$ are integers in $[0, n]$;
	\item
		\verb+delete(a, b)+: delete a (previously inserted) interval $[a, b]$ from $\mathcal{I}$;
	\item
		\verb+query()+: return the total length of the union of all intervals in $\mathcal{I}$.
\end{itemize}

It is related to the two-dimensional case of Klee's measure problem. We prove that there is a distribution over sequences of operations with $O(n)$ insertions and deletions, and $O(n^{0.01})$ queries, for which any data structure with any constant error probability requires $\Omega(n\log n)$ time in expectation. Interestingly, we use the sparse set disjointness protocol of H\aa{}stad and Wigderson [ToC'07] to speed up a reduction from a new kind of nondeterministic communication games, for which we prove lower bounds.

For applications, we prove lower bounds for several dynamic graph problems by reducing them from \diu{}.
\end{abstract}

\newpage

\section{Introduction}
In computational geometry, Klee's measure problem~\cite{Klee77, Bent77, Chan13} is the following: given $N$ rectangular ranges (axis-parallel hyperrectangles) in $d$-dimensional space, compute the volume of their union.

A classic sweep-line algorithm by Bentley~\cite{Bent77} solves the $d=2$ case in $O(N\log N)$ time: consider the line $x=x_0$ with $x_0$ continuously increasing from $-\infty$ to $\infty$; the length of the intersection of this line and the union may change only when it reaches the left or right border of a rectangle. Bentley's algorithm uses a segment tree to dynamically maintain the length of the intersection efficiently. Surprisingly, this is the best known algorithm even for an intriguing special case: all coordinates are integers within a polynomially bounded range $[0, n]$. In this case, the segment tree in Bentley's algorithm is essentially used to solve the following dynamic problem, which we call the \emph{\diu{} problem}:

Maintain a multiset $\mathcal{I}$ of intervals with integer coordinates in $[0, n]$, supporting the following operations:

\begin{itemize}
	\item
		\verb+insert(a, b)+: add an interval $[a, b]$ to $\mathcal{I}$, provided that $a$ and $b$ are integers in $[0, n]$;
	\item
		\verb+delete(a, b)+: delete a (previously inserted) interval $[a, b]$ from $\mathcal{I}$;
	\item
		\verb+query()+: return the total length of the union of all intervals in $\mathcal{I}$.
\end{itemize}

The segment tree data structure solves this dynamic problem with $O(\log n)$ insertion and deletion time, and $O(1)$ query time. For the application to 2D Klee's measure problem, there are $N$ insertions, $N$ deletions and $N$ queries to the data structure. A natural question to ask here is whether we can improve the insertion and deletion time. However, there is a very simple reduction from the partial sum problem showing that the slowest operation among insertion, deletion and query needs to take $\Omega(\log n)$ time (see Appendix~\ref{sectredps}). Moreover, P\v{a}tra\c{s}cu and Demaine~\cite{PD04b, PD06} showed an optimal trade-off between update and query time for partial sum, which can be carried over via the reduction to show that if both insertion and deletion need to be done in $O(\epsilon \log n)$ time, then query has to take $\Omega(2^{1/\epsilon}\log n)$ time. 

This seems to be the end of the story. However, in fact, there is no $o(N\log N)$ time algorithm known even for $n=N^{0.51}$.\footnote{There is a simple linear time algorithm for the $n\leq \sqrt{N}$ case.} When we apply the above dynamic problem to this case, there will be $N$ insertions, $N$ deletions, and only $n=N^{0.51}$ queries! There will be far fewer queries than insertions and deletions. The argument above does not rule out the possibility of having a \diu{} data structure with $o(\log n)$ update time, and $O(n^{0.9})$ query time. It is even possible to have a data structure with $O(1)$ insertion and deletion time, and $O(n^{0.1})$ query time. Having such a data structure would give a linear time algorithm for the above special case of Klee's measure problem, making a breakthrough on this 40-year-old problem. 

Unfortunately, we show that such data structure does not exist, even if we allow randomization, amortization and constant error probability. 

\begin{theorem}\label{thmdiu}
	For any $\epsilon>0$ and integer $n\geq 1$, there is a distribution over operation sequences to the \diu{} problem over $[0, n]$, with $O(n^{1-\epsilon})$ insertions and deletions, and $O(n^{\epsilon})$ queries, for which any data structure that correctly answers all queries simultaneously with probability $\geq 95\%$ must spend $\Omega(\epsilon^2 n^{1-\epsilon}\log n)$ probes in expectation, in the cell-probe model with word size $\Theta(\log n)$. 
\end{theorem}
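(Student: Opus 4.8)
The plan is to prove this via a reduction chain that ends at a communication lower bound. At the top level, I would design a hard distribution of operation sequences that encodes an instance of a suitable "static-to-dynamic" communication game. The natural template (à la Pătrașcu–Demaine and later refinements) is to arrange the operations into $T = \Theta(n^{\epsilon})$ epochs; in epoch $i$ we perform a batch of roughly $n^{1-\epsilon}$ insertions/deletions that encode a chunk of information, and after each epoch we issue a query. Choosing the interval coordinates cleverly, the answer to the query after epoch $i$ should reveal an aggregate (a "partial sum"–like quantity, or more precisely the kind of batched aggregate the new communication model of the paper is built around) of the data inserted in epochs $1,\dots,i$. The key design constraint is that any given epoch's contribution to the union length is "spread out" across scales — this is exactly where the \es{} / \bps{} machinery in the paper should come in — so that a data structure that is too fast cannot have communicated enough bits from early epochs to answer later queries.

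The core technical step is the information-transfer / communication argument. I would set up, for each epoch $i$, a simulated two-party (or multi-party, one-per-epoch) communication game: Alice holds the updates of epoch $i$ (plus possibly later epochs), Bob holds all the other epochs, and they must jointly produce the query answers that depend on epoch $i$. A cell-probe data structure running in time $t$ per operation yields a communication protocol whose cost, summed appropriately over epochs, is $O(t \cdot n^{1-\epsilon} \cdot \log n \cdot \text{(something))}$ bits — the $\log n$ is the word size. Crucially, this is where the paper's novelty enters: instead of a direct reduction, one uses the Håstad–Wigderson sparse set disjointness protocol to compress the communication, because in the relevant game the "intersection" (the set of memory cells that are both written in epoch $i$ and later read) is sparse. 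This compression is what makes the reduction tight enough to yield the $\log n$ factor rather than a $\log\log n$ loss. Against this, I would prove an $\Omega(n^{1-\epsilon}\log n)$ (per the right normalization, $\Omega(\epsilon^2 n^{1-\epsilon}\log n)$ after accounting for the number of epochs and the $\es{}$ loss) lower bound on the nondeterministic communication game the paper introduces; the constant $95\%$ success probability is handled by the usual amplification/averaging so that a constant fraction of epochs are "good."

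Concretely, the steps in order: (1) specify the hard distribution — pick the $T$ epochs, the per-epoch coordinate gadgets based on the \es{}/\bps{} construction, and the interleaved queries, and verify the operation counts $O(n^{1-\epsilon})$ updates, $O(n^{\epsilon})$ queries; (2) show correctness of the encoding — i.e., that the vector of query answers determines (or is determined by) the underlying communication-game input, so that a correct data structure solves the game; (3) simulate the cell-probe data structure by a protocol, using the standard "publish the cells read, charge to whichever party last wrote them" bookkeeping, and then invoke sparse set disjointness to reduce the bit cost of transmitting the relevant cell contents from $\Theta(k\log n)$ to $\Theta(k)$ plus lower-order terms, where $k$ is the overlap size; (4) invoke the new nondeterministic communication lower bound (proved earlier in the paper) to conclude that the total protocol cost, hence total expected probe count, is $\Omega(\epsilon^2 n^{1-\epsilon}\log n)$; (5) handle error probability by a union bound / Markov argument showing a $95\%$-correct data structure still forces enough epochs to be individually hard.

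I expect the main obstacle to be step (3) combined with making the bookkeeping compatible with the sparse set disjointness speedup: the HW protocol is inherently about two parties deciding disjointness, whereas here we need both parties to actually \emph{learn} the contents of the overlapping cells (not merely detect overlap), and we need this to happen for many epochs simultaneously without the error probabilities and communication costs blowing up. Getting the quantitative bound to come out as $\Omega(\epsilon^2 n^{1-\epsilon}\log n)$ — in particular tracking where the two factors of $\epsilon$ arise (one from the number of epochs $T \sim n^{\epsilon}$ versus total update budget, one presumably from the \es{} construction's scale-spreading loss or from the round complexity of the HW protocol) — is the delicate part, and is precisely why the paper routes through a purpose-built communication model rather than a black-box reduction to partial sums.
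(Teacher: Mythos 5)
Your high-level skeleton (hard distribution, information-transfer-style decomposition into communication games, simulation of the data structure by a protocol, then a communication lower bound) does match the overall architecture of the paper, but the paper's actual proof of Theorem~\ref{thmdiu} is a short, purely combinatorial reduction from the \bps{} problem (concatenate the $K$ sequences, encode the value $v$ of the $k$-th entry as the interval $[(k-1)p,(k-1)p+v]$, and implement a \bps{} query by temporarily inserting intervals that saturate all non-queried suffixes, doing one \diu{} query, and reading the answer modulo $p$); all of the machinery you describe lives in the proof of Theorem~\ref{thmbps}. Measured against that machinery, your proposal has two genuine gaps. First, you misidentify what the H\aa{}stad--Wigderson protocol is used for: you propose using it to ``reduce the bit cost of transmitting the relevant cell contents from $\Theta(k\log n)$ to $\Theta(k)$,'' which is information-theoretically impossible (Bob must actually learn $kw$ bits of content, and indeed the paper's cost bound retains the term $|P_A\cap P_B|\cdot w$). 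The real savings is in \emph{identifying} the intersection: a prover Merlin sends one bit per cell Bob probes, saying whether it lies in $P_A$, and sparse set disjointness is used only to \emph{verify} with zero error that Merlin never falsely claimed a cell is outside $P_A$. This requires the paper's new communication model with a unique certificate $Z(x,y)$ and perfect accept/reject semantics; you correctly flag this tension as ``the main obstacle'' but do not resolve it, and it is precisely the paper's central idea. Second, you treat the communication lower bound as a black box, but proving it is the other half of the difficulty: in a model with nondeterminism and constant error the desired bound is not obviously true (the paper notes the Aaronson--Wigderson $\tilde O(\sqrt n)$ protocol for inner product as a warning), the \mi{} matrix genuinely contains large monochromatic rectangles, and one must restrict to \es{} inputs and run a rank/dimension argument over $\mathbb{F}_p$ to rule out large nearly monochromatic rectangles. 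Without these two ingredients the proof does not go through.

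Two smaller quantitative issues: your proposed distribution ($n^{\epsilon}$ epochs of $n^{1-\epsilon}$ updates each) has $\Theta(n)$ total insertions and deletions, not the $O(n^{1-\epsilon})$ required by the statement (the paper uses $B=\Theta(n^{\epsilon})$ batches of $K=\Theta(n^{1-2\epsilon})$ updates each); and the two factors of $\epsilon$ both come from $\log B=\epsilon\log n$ appearing squared in the tree-summation $\Omega(KB\log^2 B/(w+\log B))$, not from an \es{} loss or from round complexity of the H\aa{}stad--Wigderson protocol.
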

We define the cell-probe model in \iftoggle{conf}{Appendix~\ref{sectcp_app}}{Section~\ref{subsectcp}}. 

It is an easy exercise to show that we can use the hard distribution from the theorem to obtain a new hard distribution with $\Theta(n^{1/0.51})=\Theta(N)$ insertions and deletions and $n$ queries, such that any data structure requires $\Omega(N\log N)$ time on it. This lower bound rules out the possibility of using the plain-vanilla sweep-line algorithm with a sophisticated data structure to solve 2D Klee's measure problem faster than the classic algorithm. As a corollary, the theorem also implies that $o(\log n)$ insertion and deletion time leads to an almost linear lower bound on query time. 

The type of running time considered by Theorem~\ref{thmdiu} is more general than the amortized time. Having amortized time $t(n)$ usually means that the first $k$ operations take at most $k\cdot t(n)$ time for every $k$, but here, even if we fix the number of operations in advance, and the data structure is allowed to use heavy preprocessing in order to optimize the total running time, there is still no way to break the lower bound. This notion of running time is usually what we care about, when applying a data structure to solve some computational problem. The only catch is that the data structure is online: it must output an answer before seeing the next operation, which makes it different from an offline computational problem. Moreover, we claim without proof the following theorem that using the same hard distribution, the same lower bound holds for data structures that correctly answer any constant fraction of the queries in expectation. 

\begin{customthm}{1$'$}
	For any $\epsilon>0$ and integer $n\geq 1$, there is a distribution over operation sequences to the \diu{} problem over $[0, n]$, with $O(n^{1-\epsilon})$ insertions and deletions, and $O(n^{\epsilon})$ queries, for which any data structure with expected fraction of correct answers at least $\delta$ must spend $\Omega_{\delta}(\epsilon^2 n^{1-\epsilon}\log n)$ probes in expectation for any $\delta\in (0, 1]$, in the cell-probe model with word size $\Theta(\log n)$. 
\end{customthm}

We prove Theorem~\ref{thmdiu} via a reduction from a more accessible intermediate problem called \emph{\bps{}}, for which we prove a lower bound directly. The \bps{} problem asks to maintain $K$ sequences $(A_{i,j})_{i\in [K],j\in [B]}$ of length $B$ over a finite field $\mathbb{F}_p$, supporting the following operations to the sequences:\footnote{In this paper, $[K]$ stands for the set of positive integers $\{1,2,\ldots,K\}$.}

\begin{itemize}
	\item
		\verb+update(+$\mathbf{j}$\verb+, +$\mathbf{v}$\verb+)+: for all $i\in [K]$, set $A_{i,j_i}$ to value $v_i$;
	\item
		\verb+query(+$\mathbf{j}$\verb+)+: return $\sum_{i\in [K]}\sum_{l\leq j_i} A_{i,l}$,
\end{itemize}
provided that $\mathbf{j}$ and $\mathbf{v}$ are vectors of length $K$, and $j_i\in [B]$, $v_i\in \mathbb{F}_p$. Basically, we need to maintain $K$ independent copies of the partial sum problem, except that when answering queries, instead of returning $K$ individual prefix sums, we only need to return the sum of these $K$ numbers. 

\begin{theorem}\label{thmbps}
For large enough integers $K,B,p$ with $p\geq B$, there is a distribution over operation sequences to the \bps{} problem with $O(B)$ updates and $O(B)$ queries, for which any data structure that correctly answers all queries simultaneously with probability $\geq 95\%$ must spend $\Omega(KB\log^2 B/(w+\log B))$ probes in expectation, in the cell-probe model with word size $w$.
\end{theorem}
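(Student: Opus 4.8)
The plan is to prove the lower bound via an information-transfer / epoch-based argument, in the style of Pătrașcu–Demaine, but lifted through the new nondeterministic communication model advertised in the abstract. First I would set up the hard distribution: pick the update indices $\mathbf{j}$ and the query indices $\mathbf{j}$ from a product distribution so that, within each of the $K$ coordinates, the sequence of updates and queries looks like the classical hard instance for partial sum — that is, a balanced interleaving organized into $\Theta(\log_\beta B)$ epochs for a branching factor $\beta = w/\log B$ (times a constant), with the values $v_i$ uniform over $\mathbb{F}_p$. The crucial point is that the $K$ coordinates are coupled only at query time (the data structure returns a single sum), so a query touching epoch $t$ must, intuitively, still convey roughly $K \cdot |\text{epoch } t|$ bits of information about the updates in that epoch, because summing $K$ independent uniform field elements does not destroy information about the summands when the rest of the state is fixed.

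The key steps, in order, are: (1) fix an epoch $t$ and condition on everything outside it; (2) identify a batch of $\Theta(B/\beta^{t})$ queries (spread across the epoch, via the \es{} structure) together with the $\Theta(B/\beta^{t})$ updates in epoch $t$ of each of the $K$ coordinates; (3) argue that the cells written during epoch $t$ and read during these queries — the ``information transfer'' $IT_t$ — must have total size $\Omega(K \cdot (B/\beta^{t}) \cdot \log B / \log p)$ bits, hence $\Omega(K \cdot (B/\beta^{t}) \cdot \log B / w)$ words (using $p \geq B$); (4) sum over the $\Theta(\log_\beta B)$ epochs and observe the information-transfer sets are disjoint across epochs, giving $\sum_t \Omega(K B \log B / (\beta^t w)) \cdot \beta^t = \Omega(K B \log B \cdot \log_\beta B / w) = \Omega(KB \log^2 B / (w (\log w - \log\log B)))$, which simplifies to the claimed $\Omega(KB\log^2 B/(w+\log B))$ after choosing $\beta$ appropriately and handling the regime $w = O(\log B)$ separately. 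Step (3) is where the new communication model enters: rather than a direct encoding argument, I would phrase the ``a small set of cells resolves a large batch of coupled partial-sum queries'' statement as a nondeterministic communication game — Alice holds the epoch-$t$ updates, Bob holds the rest plus the query answers, a nondeterministic prover supplies the contents of $IT_t$ — and invoke the lower bound for such games (proved elsewhere in the paper), using H\aa{}stad–Wigderson sparse set disjointness to make the reduction efficient when only a few of the $K$ coordinates are ``active'' in a given query.

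The main obstacle I expect is step (3) for the regime where the word size $w$ is comparable to or larger than $\log B$: a naive encoding argument would charge each query $\Omega(K \log p)$ bits regardless of how deep the epoch is, which is too lossy by a $\log B$ factor and would only yield $\Omega(KB\log B)$. Getting the extra $\log B$ requires the batched/amortized analysis — reconstructing all $\Theta(B/\beta^t)$ epoch-$t$ updates of all $K$ coordinates from the $IT_t$ cells plus the boundary information — and this is exactly the place where the coupling (one sum instead of $K$ sums per query) threatens to lose information. The resolution is that over a batch of $\Theta(B/\beta^t)$ well-spread queries one recovers $\Theta(B/\beta^t)$ consecutive partial sums per coordinate, and the differences of consecutive partial sums in coordinate $i$ pin down the individual updates $v_i$ even though each query only revealed an aggregate; one has to be careful that the ``well-spread'' queries in the batch isolate one new update per coordinate per step, which is what the \es{} property of the hard distribution buys. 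A secondary technical point is converting the expected-probe-count statement (with $95\%$ success) into the clean information-transfer bound: I would use a standard averaging/Markov argument to fix the data structure's randomness and the error locations, absorbing the constant error probability into the constants, and then run the deterministic argument above.
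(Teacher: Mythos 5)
Your high-level skeleton matches the paper's: decompose time into nested intervals, associate to each a communication game with a prover, use the H\aa{}stad--Wigderson sparse set disjointness protocol to make verification cheap, and sum the per-game bounds so that each probe is charged $O(\log B)$ times for the ``union'' terms and $O(w)$ once for the ``intersection'' terms, yielding $T(w+\log B)\geq \Omega(KB\log^2 B)$. (The paper uses a binary information-transfer tree over the $B$ update/query pairs, with update positions fixed by the bit-reversal permutation and only the values and query indices random; your $\beta$-ary epoch variant is a cosmetic difference.) Two smaller inaccuracies: in the paper's model Merlin does not supply the contents of the transferred cells but only a one-bit-per-first-probe indicator of whether the cell lies in $P_A$ (Alice supplies contents on demand), and---crucially for the lower bound side---Merlin's message must be \emph{unique} and verifiable with zero error, which is strictly weaker than generic nondeterminism and is what makes a lower bound provable at all (the paper notes that in $\mathrm{MA^{cc}}\cap\textrm{co-}\mathrm{MA^{cc}}$ the relevant inner-product-like problem has an $\tilde O(\sqrt{n})$ protocol, so ``a nondeterministic prover supplies $IT_t$'' is too strong a model to prove anything against).

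The genuine gap is your step (3), and the justification you give for it is exactly the argument the paper explains cannot work. You claim a batch of queries touching an epoch must convey $\Omega(K\cdot|\text{epoch}|)$ bits about the epoch's updates ``because summing $K$ independent uniform field elements does not destroy information about the summands.'' It does: each query answer is a single element of $\mathbb{F}_p$, so a batch of $L$ queries reveals at most $L\log p$ bits, whereas the epoch contains $LK\log p$ bits of update values. Your later refinement---that differences of consecutive partial sums ``pin down the individual updates $v_i$'' per coordinate---fails for the same reason: the difference of two answers is again a single aggregate over all $K$ coordinates. No encoding argument that reconstructs the updates from the answers can beat $\Omega(\log p)$ per query, which is why the paper abandons that route entirely. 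Instead, the per-game bound $\Omega(LK\log L)$ is proved by reducing the game to the \mi{} problem and running a monochromatic-rectangle argument: the uniqueness and perfect verifiability of Merlin's message partition the input matrix into $2^{O(C)}$ rectangles, and Lemma~\ref{almonorect} shows every $0.7$-monochromatic rectangle avoiding the non-\es{} inputs has measure at most $\max\{2^{-\Omega(LK\log L)},2^{-\Omega(LK\log p)}\}$---the $LK\log L$ term coming from the entropy of Bob's \emph{query indices} (a dimension-counting argument over $\mathbb{F}_p^{LK}$ restricted to \es{} inputs), not from the entropy of the answers. Your proposal is missing this rectangle argument and the reason the lower bound scales with $K$ at all; as written, step (3) would only give $\Omega(\log p)$ per query and hence $\Omega(B\log B\log p)$ overall, losing the factor of $K$.
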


Note that when $K,B,p$ are all polynomials in $n$ and $w=\Theta(\log n)$ for some $n$, the lower bound becomes $\Omega(K\log n)$ per operation. Therefore, in this case, the best thing to do is just to use $K$ partial sum data structures to maintain the $K$ sequences independently. However, the techniques we use in the proof are very different from the proof of the lower bound for partial sum by P\v{a}tra\c{s}cu and Demaine~\cite{PD04a}. See Section~\ref{subsecttc} for an overview. 

Moreover, we also apply our main theorem to prove lower bounds for three dynamic graph problems: dynamic \#SCC, dynamic weighted s-t shortest path, and dynamic planar s-t min-cost flow (see Appendix~\ref{sectcata} for formal definitions). We prove that for these problems, under constant error probability, if we spend $o(\log n)$ update time, then queries must take $n^{1-o(1)}$ time. Note that a previous result by P\v{a}tra\c{s}cu and Thorup~\cite{PT11} also implies the same trade-off for the first two problems under zero error. 

\begin{customcor}{\iftoggle{conf}{2}{\ref{cordyngraph}}}
For the following three dynamic graph problems:
\begin{enumerate}[(a)]
	\item
		dynamic \#SCC,
	\item
		dynamic planar s-t min-cost flow,
	\item
		dynamic weight s-t shortest path,
\end{enumerate}
any data structure with amortized expected update time $o(\log n)$, and error probability $\leq 5\%$ under polynomially many operations must have amortized expected query time $n^{1-o(1)}$. 
\end{customcor}

\def\cellprobepre{The cell-probe model of Yao~\cite{Yao81} is a strong non-uniform computational model for data structures. A data structure in the cell-probe model has access to a set of memory cells. Each cell can store $w$ bits. The set of cells is indexed by $w$-bit integers, i.e., the address is in $[2^w]$. $w$ is usually set to be $\Omega(\log n)$, where $n$ is the amount of information the data structure needs to handle. \footnote{$n$ is usually a polynomial in the number of operations. }

To access the memory, the data structure can \emph{probe} a cell, which means that it can look at the content of the cell, and then optionally overwrite it with a new value. During an operation, the data structure based on the parameters of the operation decides which cell to probe the first, then based on the parameters and the information from the first probe, decides the cell to probe next, etc. Each cell-probe (including both the address and the new value) the data structure performs may be an arbitrary function of the parameters of the operation and the contents in the cells previously probed during this operation. If the operation is a query, in the end, the data structure returns an answer based on the parameters and the contents in all cells probed during this operation. The update (query resp.) time is defined to be the number of cells probed during a(n) update (query resp.) operation. 

The cell-probe model only counts the number of memory accesses during each operation, making itself a strong model, e.g., it subsumes the word-RAM model. Thus, data structure lower bounds proved in this model will hold in various other settings as well. }
\iftoggle{conf}{}{
\subsection{Cell-probe Model}\label{subsectcp}
\cellprobepre
}

\subsection{Previous Cell-probe Lower Bounds}\label{subsectplb}

In 1989, Fredman and Saks introduced the \emph{chronogram} method to prove an $\Omega(\log n/\log\log n)$ lower bound for partial sum in their seminal paper~\cite{FS89}. The lower bound is tight for maintaining a sequence of $\{0,1\}$s. $\Omega(\log n/\log\log n)$ was also the highest lower bound proved for any explicit data structure problem for a long time.

In 2004, P\v{a}tra\c{s}cu and Demaine~\cite{PD04a} broke this $\log n/\log\log n$ barrier using a new approach: the \emph{information transfer tree} technique. They proved an $\Omega(\log n)$ lower bound for the partial sum problem with numbers in $[n]$. Moreover, using this new technique, one can prove an update-query time trade-off of $t_q\log\frac{t_u}{t_q}=\Omega(\log n)$, where $t_u$ is the update time and $t_q$ is the query time, while earlier approaches can only prove $t_q\log t_u=\Omega(\log n)$. Later on, the information transfer tree technique has been used to prove several other data structure lower bounds~\cite{PD04b, CJS15, CJ11, CJS13}.

In 2012, there was a breakthrough by Larsen~\cite{Larsen12a} on dynamic data structure lower bounds. Larsen combined the chronogram method with the \emph{cell sampling} technique of Panigraphy, Talwar and Wieder~\cite{PTW10}, and proved an $\Omega((\log n/\log\log n)^2)$ lower bound for the 2D orthogonal range counting problem. This lower bound is also the highest lower bound proved for any explicit dynamic data structure problem so far. Similar approaches were also applied later~\cite{Larsen12b, CGL15}. 

All above techniques can only be used to prove a relatively smooth trade-off between update and query time. However, P\v{a}tra\c{s}cu and Thorup~\cite{PT11} used a new idea to prove a sharp trade-off for the dynamic connectivity problem in undirected graphs. They proved that if one insists on $o(\log n)$ insertion and deletion time, query has to take $n^{1-o(1)}$ time. Besides the sharp trade-off, they also introduced the \emph{simulation by communication games} of the data structure. They first decomposed the entire execution of the data structure on a sequence of operations into several communication games. For each communication game, they showed how to turn a ``fast'' data structure into an efficient communication protocol. Then they proved a communication lower bound for each game. Summing all these lower bounds up establishes a lower bound on the total number of probes in the entire execution.

\subsection{Technical Contributions}\label{subsecttc}
Although the \bps{} problem looks similar to the partial sum problem, it seems hopeless to apply the information transfer technique directly to solve our problem. It is due to a critical difference between the two problems: in partial sum, the lower bound proved roughly equals to the number of bits in the answer to a query; while in batch partial sum, the lower bound we aim at is much larger than the size of an answer. The proof in \cite{PD04a} heavily relies on the fact that in partial sum problem, after fixing the values in a lot of entries in the sequence, as long as there is still one summand in the prefix sum left uniformly at random, the sum will also be uniformly at random. Therefore, we will need to learn a certain amount of information from the memory to figure out the answer. If we apply the same technique in \bps{}, as an answer still contains only $\log n$ bits of information, we will again get a lower bound of $\Omega(\log n)$ per operation, while we aim at $\Omega(K\log n)$. The cell sampling technique has a similar issue. It can only be applied when the number of bits used to describe a query is comparable with the number of bits used in an answer. 

The main idea of our proof is to use the simulation by communication games technique mentioned in Section~\ref{subsectplb}. After decomposing into communication games, there are two things to prove: a ``fast'' data structure implies an efficient communication protocol, and no efficient communication protocol exists. The choice of communication model for the game is crucial. If we use a too weak communication model, it would be hard to take advantage of the model to design an efficient protocol given fast data structure. If the communication model we use is too strong, it would be difficult or even impossible to prove a communication lower bound, especially when small chance of error is allowed. P\v{a}tra\c{s}cu and Thorup gave two different simulations: one in the deterministic setting, the other in the nondeterministic setting. The deterministic simulation itself (transforming a data structure into a communication protocol) is not efficient enough to achieve our lower bound. The nondeterministic model, in our case of allowing error, would correspond to the distributional $\mathrm{MA^{cc}}\cap \textrm{co-}\mathrm{MA^{cc}}$ model. It is particularly difficult to prove a lower bound in this model. In our application, the communication problem we want to prove a lower bound for is closely related to the inner product problem. Namely, Alice and Bob get $n$-dimensional binary vectors $x$ and $y$ respectively and the goal is to compute the inner product $\left<x, y\right>$ over $\mathbb{F}_2$. There is a clever $\mathrm{MA^{cc}}\cap \textrm{co-}\mathrm{MA^{cc}}$ protocol by Aaronson and Wigderson~\cite{AW08} which solves the inner product problem with only $\tilde{O}(\sqrt{n})$ bits of communication. It has much less cost than expected, which suggests that it might even be impossible to prove a desired lower bound for our problem in this strong model. 

To overcome this obstacle, we define a new communication model (see Section~\ref{sectdiubps}), which is weaker than $\mathrm{MA^{cc}}\cap \textrm{co-}\mathrm{MA^{cc}}$, so that we are capable of proving a desired communication lower bound. Moreover, we will be able to achieve the same performance of transforming data structure into protocol as in the nondeterministic model. Interestingly, in order to have less requirement on the power of communication model, we use an elegant protocol for computing sparse set disjointness by H\aa{}stad and Wigderson~\cite{HW07} as a subroutine:

\begin{theorem}[H\aa{}stad and Wigderson]\label{thmsparse}
In the model of common randomness, $R_0(\mathrm{DISJ}^n_k) = O(k)$ for instances of disjoint sets and $R_0(\mathrm{DISJ}^n_k) = O(k+\log n)$ for non-disjoint sets.
\end{theorem}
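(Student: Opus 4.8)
The plan is to describe the protocol for sparse set disjointness and analyze its cost, following the iterated-halving idea. Alice holds a set $S\subseteq[n]$ with $|S|\le k$ and Bob holds $T\subseteq[n]$ with $|T|\le k$; they share common random bits, and wish to decide whether $S\cap T=\emptyset$ (and, in the non-disjoint case, output a common witness element). First I would set up the basic \emph{one-round halving} step: using the shared randomness, the players agree on a uniformly random subset $R\subseteq[n]$ (equivalently, a random hash coloring of the universe into two classes). Alice sends Bob one bit indicating whether $S\subseteq R$ — more usefully, the players restrict attention to the elements that fall in the ``surviving'' class. The key probabilistic observation is that if $S$ and $T$ are disjoint, then conditioning on all of $S$ surviving, each element of $T$ survives independently with probability $1/2$, so in expectation $|T|$ halves while $|S|$ is untouched; by symmetry we alternate which player's set we protect. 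Iterating this $O(\log n)$ times with Chernoff bounds shrinks both sets down to constant size while blowing up the communication by only a constant factor per successful round, but the naive accounting gives $O(k\log n)$, not $O(k)$ — the whole point is to be cleverer.

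The actual protocol achieving $O(k)$ is the recursive ``binary-search on a random permutation'' scheme. I would have the players use common randomness to pick a uniformly random permutation $\pi$ of $[n]$, and think of both sets as subsets of the line $[n]$ reordered by $\pi$. They then run a recursive procedure: Alice reveals $|S\cap (\text{left half})|$ (an $O(\log k)$-bit number), Bob compares with $|T\cap(\text{left half})|$, and they recurse into each half only if both sets have nonempty intersection with that half in a way that could still collide — more precisely the recursion explores the at most $O(k)$ leaves of the binary tree where an element of $S$ or $T$ sits, and the random permutation guarantees that the ``collision-candidate'' subtrees the protocol must fully descend into have total size that telescopes to $O(k)$ rather than $O(k\log(n/k))$. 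The non-disjoint case costs an extra $O(\log n)$ bits because once a colliding leaf is localized, its identity (a $\log n$-bit index) must be named; this also removes the need for the randomized halving to ``luckily'' keep $S$ intact, since a witness can be verified deterministically, which is what makes the protocol zero-error ($R_0$).

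The main obstacle — and the step deserving the most care — is the communication accounting that turns the obvious $O(k\log n)$ into $O(k)$: one must show that, over the random permutation, the expected total number of bits the players exchange while resolving all the $O(k)$ relevant positions is $O(k)$, i.e.\ that the ``search cost'' per element is $O(1)$ \emph{on average} rather than $O(\log n)$ per element. This is where a martingale / linearity-of-expectation argument over the random permutation is needed: the expected number of internal tree nodes at depth $d$ that the protocol must branch on decays geometrically, so summing over depths gives a convergent geometric series times $O(k)$. I would also need to handle the low-probability bad events (a halving round that fails to shrink enough, or an unlucky permutation) by a restart/repetition that only affects the expectation by a constant factor, and to verify that in the disjoint case no $\log n$-bit index is ever transmitted so the bound is the clean $O(k)$. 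The correctness (zero error) is comparatively routine: in the disjoint case every round's count comparison is consistent and the recursion bottoms out with empty intersection; in the non-disjoint case the protocol explicitly isolates and confirms a common element.
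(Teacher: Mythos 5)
First, a point of orientation: the paper does not prove Theorem~\ref{thmsparse} at all --- it is imported verbatim from H\aa{}stad and Wigderson [HW07] and used as a black box, so there is no in-paper proof to match. Judged on its own terms, your proposal has a genuine gap, and somewhat ironically it lies in the paragraph you abandon rather than the one you pursue. Your first paragraph is essentially the correct H\aa{}stad--Wigderson protocol: using public randomness the players repeatedly restrict to a random subclass of the universe, with the sender certifying that \emph{all} of her current set survives, so that (conditioned on disjointness) the other set halves while any common element is never discarded --- this is what makes the protocol zero-error. What you miss is the accounting that makes this $O(k)$: the sender does not send ``one bit per trial'' but the index of the first random set in the public sequence that contains her whole set, which costs $O(|A|)$ bits in expectation (the index is geometric with mean $2^{|A|}$); and because the players \emph{alternate} roles, each round's cost is proportional to the size of the \emph{current sender's already-shrunken} set, so the round costs form a geometric series $k + k/2 + k/2 + k/4 + \cdots = O(k)$. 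The bound is not ``naively $O(k\log n)$''; it telescopes exactly because the set that was just halved becomes the next sender. The final $O(\log n)$ term for non-disjoint instances appears only at the end, when a surviving common element must be named explicitly.

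The replacement protocol in your second paragraph does not work as described and is not the H\aa{}stad--Wigderson construction. Comparing the cardinalities $|S\cap(\text{left half})|$ and $|T\cap(\text{left half})|$ conveys no information about whether the two sets collide in that half, so there is no valid pruning rule; the recursion would have to descend into every subtree containing elements of both sets, and even under a random permutation the branching nodes number $\Theta(k)$ with $\Theta(\log k)$ bits each, giving at best $O(k\log k)$. You explicitly flag the $O(k\log n)\to O(k)$ accounting as ``the main obstacle'' and then do not resolve it --- but that accounting is precisely the content of the theorem, so the proposal as written does not constitute a proof. The fix is to return to your first approach and supply the alternating, geometrically decaying cost analysis sketched above (together with the standard restart argument for rounds in which the set fails to shrink by a constant factor).
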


$\mathrm{DISJ}^n_k$ is the following problem: Alice and Bob get sets $X$ and $Y$ of size $k$ over a universe $[n]$ respectively, their goal is to compute whether the two sets are disjoint. $R_0(\mathrm{DISJ}^n_k)$ stands for the minimum expected communication cost by any zero-error protocol which computes $\mathrm{DISJ}^n_k$. 

As we will see later, this new communication model has the power of nondeterminism. Also it is restricted enough so that we can apply the classic techniques for proving randomized communication lower bounds. Using this model, we prove the first sharp update-query trade-off under constant probability of error. 

\subsection{Overview}
The remainder of this paper is organized as follows. In Section~\ref{sectdiubps}, we present the reduction from \bps{} to \diu{}, and define the new communication model and the new simulation. In Section~\ref{sectcomm}, we prove a communication lower bound in this new model, which completes the proof of our main result. \iftoggle{conf}{Some proof details and the applications of our main theorem to dynamic graph problems can be found in the full version. }{In Section~\ref{sectapp}, we apply the main theorem to several dynamic graph problems. }Finally, we conclude with some remarks in Section~\ref{sectremark}.

\def\sectprecomm{
In the classic deterministic communication complexity setting~\cite{Yao79}, two players Alice and Bob receive inputs $x\in \mathcal{X}$ and $y\in \mathcal{Y}$ respectively. Their goal is to collaboratively evaluate a function $f$ on their joint input $(x, y)$. The players send bits to each other according to some predefined protocol. At each step, the protocol must specify which player sends the next bit based on the transcript (the bits sent so far). The sender decides to send a bit 0 or 1 based on the transcript and his/her input. It the end, the answer $f(x,y)$ can only depend on the entire transcript. In the setting with public randomness, the players have access to a common random binary string of infinite length. Besides the sender's input and the transcript, each message may also depend on these random bits. The players have infinite computational power. The cost of a protocol is the number of bits communicated, i.e., the length of the transcript.

\begin{definition}
	For function $f$ with domain $\mathcal{X}\times \mathcal{Y}$, the matrix $M(f)$ is  a $|\mathcal{X}|\times |\mathcal{Y}|$ matrix, with rows indexed by $\mathcal{X}$ and columns indexed by $\mathcal{Y}$. The entry in row $x$ and column $y$ is the function value $f(x, y)$.
\end{definition}

\begin{definition}
A \emph{combinatorial rectangle} or simply a \emph{rectangle} in $M(f)$ is a set $X\times Y$ for $X\subseteq \mathcal{X}$ and $Y\subseteq \mathcal{Y}$. 
\end{definition}

\begin{definition}
A \emph{monochromatic rectangle} in $M(f)$ is a combinatorial rectangle in which the function value does not vary. 
\end{definition}

\begin{definition}
Let $\mu$ be a distribution over $\mathcal{X}\times \mathcal{Y}$. A $\alpha$-\emph{monochromatic rectangle} under $\mu$ is a combinatorial rectangle $X\times Y$ such that there is a function value $v$, $\alpha\mu(X\times Y)\leq \mu((X\times Y)\cap f^{-1}(v))$, i.e., a combinatorial rectangle with at least $\alpha$-fraction of the input pairs having the same function value. 
\end{definition}

A classic result~\cite{BookKN97} in communication complexity is that every protocol in the deterministic setting with worst-case communication cost $C$ induces a partitioning of $M(f)$ into $2^C$ monochromatic rectangles. Each rectangle corresponds to one possible transcript, i.e., when the players are given an input pair in this rectangle, the corresponding transcript will be transmitted. A similar result shows that every randomized protocol with low error probability induces a partitioning into rectangles, such that \emph{most} of the rectangles are \emph{nearly monochromatic} ($\alpha$-monochromatic with $\alpha$ close to $1$). Proving there is no large monochromatic rectangle or nearly monochromatic rectangle in $M(f)$ would imply communication lower bounds in deterministic or randomized setting respectively.
}

\iftoggle{conf}{}{
\section{Preliminaries on Communication Complexity}\label{sectpre}
\sectprecomm
}

\section{Lower Bounds for Dynamic Interval Union and Batch Partial Sum}\label{sectdiubps}

In this session, we will prove our main result, a lower bound for the \diu{} problem, via a reduction from the \bps{} problem. 

\begin{customthm}{\ref{thmdiu}}
For any $\epsilon>0$ and integer $n\geq 1$, there is a distribution over operation sequences to the \diu{} problem over $[0, n]$, with $O(n^{1-\epsilon})$ insertions and deletions, and $O(n^{\epsilon})$ queries, for which any data structure that correctly answers all queries simultaneously with probability $\geq 95\%$ must spend $\Omega(\epsilon^2 n^{1-\epsilon}\log n)$ probes in expectation, in the cell-probe model with word size $\Theta(\log n)$. 
\end{customthm}

\begin{customthm}{\ref{thmbps}}
For large enough integers $K,B,p$ with $p\geq B$, there is a distribution over operation sequences to the \bps{} problem with $O(B)$ updates and $O(B)$ queries, for which any data structure that correctly answers all queries simultaneously with probability $\geq 95\%$ must spend $\Omega(KB\log^2 B/(w+\log B))$ probes in expectation, in the cell-probe model with word size $w$. 
\end{customthm}

The idea of this reduction is similar to the proof of Proposition~\ref{redps} in Appendix~\ref{sectredps}. 

\begin{proof}[Proof of Theorem~\ref{thmdiu}]

Take prime $p=\Theta(n^{\epsilon})$, $B=\Theta(n^{\epsilon})$ with $B\leq p$, and $K=n/Bp=\Theta(n^{1-2\epsilon})$. We are going to show that we can solve the \bps{} problem with this setting of the parameters given a \diu{} data structure over $[0, n]$. We first concatenate the $K$ sequences into one long sequence of length $KB$, such that $A_{i,j}$ will be $((i-1)\cdot B+j)$-th number in the long sequence, and try to maintain the whole sequence using one \diu{} data structure. Then we associate each number in the long sequence with a segment of length $p$ in $[0, n]$ such that the $k$-th number in the long sequence is associated with $[(k-1)\cdot p,k\cdot p]$. We use the length of interval in the associated segment to indicate the value of the number. That is, we always maintain the invariant that for $k$-th number in the long sequence with non-zero value $v$, we have exactly one interval $[(k-1)\cdot p, (k-1)\cdot p+v]$ intersecting its associated segment. 

To set $k$-th number to a new value $v'$, if before the operation it had value $v\neq 0$, we first call $\texttt{delete}((k-1)p, (k-1)p+v)$ to reset the number. Then if $v'\neq 0$, we call $\texttt{insert}((k-1)p, (k-1)p+v')$ to update its new value to $v'$. Therefore, as an \verb+update+ of the \bps{} problem is just setting $K$ numbers to new values, it can be implemented using $O(K)$ insertions and deletions of the \diu{} problem, with $O(K)$ extra probes to determine what the old value was and to record the new value.

To answer $\texttt{query}(\mathbf{j})$, we first insert intervals that correspond to associated segments of the $(j_i+1)$-th number to the last number in sequence $i$ for $i\in [K]$, to set everything we are not querying to be ``in the union'', no matter how much they were covered by intervals before. That is, we insert $[((i-1)B+j_i)\cdot p, iB\cdot p]$ for each sequence $i$. Then we do one query, which will return the sum of all numbers as if each number not in the query was set to $p$ (or $0$ modulo $p$). This number modulo $p$ is exactly the answer we want. At last, we do $K$ deletions to remove the temporary intervals we inserted earlier for this query, and return the answer. Therefore, every \verb+query+ of \bps{} can be implemented using $O(K)$ insertions and deletions, and one query of the \diu{}. 

Thus, any sequence of $O(B)$ updates and $O(B)$ queries of \bps{} can be implemented using $O(KB)=O(n^{1-\epsilon})$ insertions and deletions, $O(B)=O(n^{\epsilon})$ queries of \diu{}, and extra $O(K)=O(n^{1-\epsilon})$ probes. However, by Theorem~\ref{thmbps}, there is a hard distribution consisting of $O(B)$ updates and $O(B)$ queries, which requires $\Omega(KB\log^2 B/(w+\log B))=\Omega(\epsilon^2 n^{1-\epsilon}\log n)$ probes in expectation in the cell-probe model with word size $w=\Theta(\log n)$. By the above reduction, this hard distribution also induces a distribution over operation sequences for \diu{} with desired number of updates, queries and lower bound on the number of probes. This proves the theorem. 

\end{proof}

By setting $\epsilon=\sqrt{t_u/\log n}$ in Theorem~\ref{thmdiu}, we get the following corollary. 
\begin{corollary}\label{coldiu}
Any \diu{} data structure that answers all queries correctly in a sequence of $O(n)$ operations with probability $\geq 95\%$ with expected amortized insertion and deletion time $t_u$ and query time $t_q$ must have \[
	t_q\geq t_un^{1-\sqrt{t_u/\log n}}.
\]

In particular, $t_u=o(\log n)$ implies that $t_q=n^{1-o(1)}$. 
\end{corollary}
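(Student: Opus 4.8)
The plan is to read off Corollary~\ref{coldiu} directly from Theorem~\ref{thmdiu}, by substituting the prescribed value of the parameter and balancing the two contributions to the total running time; there is no new content here beyond bookkeeping. Fix a \diu{} data structure that, on sequences of $\Theta(n)$ operations, answers every query correctly with probability at least $95\%$ and runs in amortized expected insertion/deletion time $t_u$ and amortized expected query time $t_q$. If $t_u \ge \log n$ the asserted inequality is vacuous, so assume $t_u < \log n$ and set $\epsilon := c_0\sqrt{t_u/\log n}$ for a suitable absolute constant $c_0$, which then lies in $(0,1)$; this is a legitimate choice of the parameter in Theorem~\ref{thmdiu}. (When $t_u < 1$ one instead applies the argument with $t_u$ replaced by $\max(t_u,1)$, which only makes the estimates easier because the update contribution below becomes genuinely negligible.)

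First I would feed this $\epsilon$ into Theorem~\ref{thmdiu}. It produces a distribution over operation sequences with $O(n^{1-\epsilon})$ insertions and deletions and $O(n^{\epsilon})$ queries, hence $O(n)$ operations in total, so the amortized-time hypothesis applies to it; and on this distribution any data structure of the above kind performs $\Omega(\epsilon^{2} n^{1-\epsilon}\log n)$ probes in expectation. Since $\epsilon$ was chosen precisely so that $\epsilon^{2}\log n = \Theta(t_u)$, this is a lower bound of $\Omega(t_u\, n^{1-\epsilon})$, with a leading constant that grows with $c_0$, on the \emph{expected total} probe count.

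Second, I would bound that same quantity from above using the amortized guarantees: on a sequence with $O(n^{1-\epsilon})$ updates and $O(n^{\epsilon})$ queries the expected number of probes is at most $O(n^{1-\epsilon})\, t_u + O(n^{\epsilon})\, t_q$. Comparing the two bounds and choosing $c_0$ large enough that the lower bound from Theorem~\ref{thmdiu} strictly dominates the $O(n^{1-\epsilon})\, t_u$ coming from the updates, one is left with $O(n^{\epsilon})\, t_q = \Omega(t_u n^{1-\epsilon})$, i.e.\ $t_q = \Omega\!\big(t_u\, n^{1-\Theta(\sqrt{t_u/\log n})}\big)$, which is the trade-off stated in the corollary (up to the absolute constant absorbed into the exponent). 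The ``in particular'' clause is then immediate: $t_u = o(\log n)$ forces $\epsilon = o(1)$, so $n^{1-\epsilon} = n^{1-o(1)}$, and since a correct query on the hard distribution must cost at least one probe we have $t_q \ge 1$, whence $t_q = n^{1-o(1)}$.

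There is essentially no obstacle here — the proof is a one-step reduction — but the few things to be careful about are exactly the ones listed above: verifying that the hard distribution has $O(n)$ operations so that ``amortized expected time per operation'' is meaningful and translates into the stated bound on the \emph{total} expected probe count; and calibrating the constant inside $\epsilon$ so that, after subtracting the update contribution, the query term is the binding constraint rather than being swamped. All of the genuine difficulty lives in Theorem~\ref{thmdiu} itself — equivalently, in the \bps{} lower bound of Theorem~\ref{thmbps} — which this corollary merely specializes.
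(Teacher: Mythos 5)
Your proposal is correct and is exactly the argument the paper intends: the paper derives Corollary~\ref{coldiu} in one line by setting $\epsilon=\sqrt{t_u/\log n}$ in Theorem~\ref{thmdiu}, and your write-up just fills in the standard bookkeeping (total probes at most $O(n^{1-\epsilon})t_u+O(n^{\epsilon})t_q$ versus the $\Omega(\epsilon^2 n^{1-\epsilon}\log n)$ lower bound, with the constant in $\epsilon$ chosen so the update term is not the binding one). The resulting exponent differs from the stated one by an absolute constant factor inside $\sqrt{t_u/\log n}$, but that looseness is present in the paper's own statement and does not affect the ``in particular'' conclusion.
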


In the following, we are going to prove a lower bound for the \bps{} problem. We will first specify a hard distribution over operation sequences. Then by Yao's Minimax Principle~\cite{Yao77}, it suffices to show that any \emph{deterministic} data structure that answers all queries correctly with high probability under this input distribution must be inefficient. To show this, we will consider a collection of communication games corresponding to different parts of the operation sequence (different time periods). For each communication game, if the data structure is \emph{fast} within the corresponding time period under certain measure of efficiency, then the game can be solved with \emph{low communication cost}. On the other hand, we will prove communication lower bounds for all these games. Summing these lower bounds up, we will be able to prove that the entire execution cannot be efficient. 

\paragraph*{Hard distribution $\mathcal{D}$:}

Without loss of generality, assume $B$ is a power of $2$, and $B=2^b$. The operation sequence will always have $B$ updates and $B$ queries occurring alternatively: $U_0,Q_0,\ldots, U_{B-1},Q_{B-1}$, where the $U_t$'s are updates, and the $Q_t$'s are queries. The operations are indexed by integers between $0$ and $B-1$, or they can be viewed as being indexed by $b$-bit binary strings (which corresponds to the binary representation of the integer). We may use either of these two views in the rest of the paper without further mention. Let $t$ be a binary string, $\rev(t)$ be the string with $t$'s bits reversed. For each $U_t$, we set it to \verb+update(+$\mathbf{j}$\verb+, +$\mathbf{v}$\verb+)+ with $j_i=\rev(t)$ and $v_i$ independently uniformly chosen from $\mathbb{F}_p$ for every $i$. For each $Q_t$, we set it to \verb+query(+$\mathbf{j}$\verb+)+ with $j_i$ independently and uniformly chosen from $[B]$. Different operations are sampled independently. Indicate this distribution by $\mathcal{D}$. 

\paragraph*{Communication game:}

The hard distribution $\mathcal{D}$ has a fixed pattern of updates and queries. Let us fix two \emph{consecutive} intervals $I_A, I_B$ of operations (with $I_A$ before $I_B$) in the sequence. Define the communication game $G(I_A,I_B)$ between two players Alice and Bob to be the following: sample a sequence from $\mathcal{D}$, Alice is given all operations except for those in $I_B$, Bob is given all operations except for those in $I_A$, their goal is to cooperatively compute the answers to all queries in $I_B$ by sending messages to each other alternatively.

Given a deterministic \bps{} data structure, a nature way to solve this game is to let Bob first simulate the data structure up to the beginning of $I_A$, then skip all the operations in $I_A$ and try to continue simulating on $I_B$. Every time Bob needs to probe a cell, if it was probed in $I_A$, he sends a message to Alice asking for the new value, otherwise he knows exactly what is in the cell from his own simulation. The challenge for Bob is to figure out which cells were probed. Our main idea is to introduce a prover Merlin, who knows both Alice and Bob's inputs. Merlin will tell them this information in a \emph{unique} and \emph{succinct} way. Moreover, the players will be able to \emph{verify} whether the message from Merlin is exactly what they expect, and will be able to solve the task efficiently if it is. This motivates the following definition of a new communication model.

\paragraph*{Communication model $\mathcal{M}$:}

Draw an input pair $(x, y)$ from a known distribution. Alice is given $x$, Bob is given $y$ and Merlin is given both $x$ and $y$. Their goal is to compute some function $f(x, y)$. As part of the communication protocol, the players must specify a \emph{unique} string $Z(x, y)$ for every possible input pair, which is the message Merlin is supposed to send. A communication procedure shall consist of the following four stages:

\begin{enumerate}
	\item
		Merlin sends a message $z$ to Alice and Bob based on his input pair $(x, y)$; 
	\item
		Alice and Bob communicate based on $(x, z)$ and $(y, z)$ as in the classic communication setting with public randomness. Merlin does not see the random bits when sending the message $z$;
	\item
		Alice and Bob decide to accept or reject;
	\item
		if the players accept in Stage 3, they return a value $v$.
\end{enumerate}

In this model, we say a protocol computes function $f$ with error $\varepsilon$ and communication cost $C$, if 
\begin{enumerate}
	\item
		Alice and Bob accept in Stage 3 if and only if Merlin sends what he is supposed to send, i.e., $z=Z(x, y)$ (with probability 1),
	\item
		given $z=Z(x, y)$, the value $v$ they return equals to $f(x, y)$ with probability $\geq 1-\varepsilon$ over the input distribution and public randomness,
	\item
		given $z=Z(x, y)$, the expected number of bits communicated between the three players in Stage 1 and 2 is no more than $C$ over the input distribution and public randomness.
\end{enumerate}

\begin{remark}
The public randomness used in Stage 2 does not help the players in general. Nevertheless, we still keep it in the definition for the sake of neatness of our proof. 
\end{remark}

$Z(x, y)$ can be viewed as a question that the players want to ask Merlin about their joint input. One can potentially design more efficient protocols in this model than in the classic communication model if verifying the answer to this question is easier than computing it. 

With respect to this communication model, on one hand, we can show that given a ``good'' \bps{} data structure, we can solve the communication game efficiently (Lemma~\ref{commupper}). On the other hand, we prove a communication lower bound for it (Lemma~\ref{commlower}). Combining these two lemmas, we conclude that no ``good'' data structure exists. 

\begin{lemma}\label{commupper}
Given a deterministic \bps{} data structure for the cell-probe model with word size $w$ that is correct on all queries in a random operation sequence drawn from $\mathcal{D}$ with probability $\geq 90\%$, we can solve the communication game $G(I_A,I_B)$ with error $0.1$ and cost \[
O\left(\E_{\mathcal{D}}\left[|P_A|+|P_B|+|P_A\cap P_B|\cdot w\right]\right)\] in \mm{}, where $P_A$ ($P_B$ resp.) is the (random) set of cells probed in time period $I_A$ ($I_B$ resp.) by the data structure. 
\end{lemma}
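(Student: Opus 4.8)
The plan is to let Merlin hand Bob precisely the piece of memory state that Bob cannot reconstruct himself, so that Bob can finish simulating the (deterministic) data structure through $I_B$, and to design the acceptance test so that any message other than the intended one is rejected with certainty. First note what the two players can compute unaided: since $I_A$ immediately precedes $I_B$, Alice's input (all operations but those in $I_B$) contains everything up to and including $I_A$, so Alice can run the data structure through the end of $I_A$; in particular she knows $P_A$ and the full memory image $M^{\ast}$ at the instant $I_B$ begins. Bob's input (all operations but those in $I_A$) contains everything before $I_A$, so Bob can run the data structure through the start of $I_A$ and knows the memory image $M^{-}$ there; $M^{-}$ and $M^{\ast}$ differ only on cells written during $I_A$, all of which lie in $P_A$. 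I would define the intended Merlin message $Z(x,y)$ to be a canonical encoding of the pair $\bigl(S,\,M^{\ast}|_{S}\bigr)$ where $S:=P_A\cap P_B$; this is a deterministic function of the input, as the model requires, of length $O(|P_A\cap P_B|\cdot w)$.

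Upon receiving $z$, which decodes to a claimed set $S$ and claimed cell contents, Bob initializes a working memory image to $M^{-}$ with the content of each $c\in S$ overwritten by its claimed value, and then re-executes the operations of $I_B$ on this image, reading and writing cells in it as the simulated operations dictate. When $z=Z(x,y)$ this simulation is faithful, by induction on the steps of $I_B$: if a step is the first probe of some cell $c$ during $I_B$, then $c\in P_B$, and either $c\in S$, in which case the stored value is the true value $M^{\ast}[c]$ since $I_B$ has not yet touched $c$, or $c\notin P_A$, in which case $I_A$ left $c$ alone and $M^{-}[c]$ is already correct; a repeated probe reads a value Bob has tracked himself. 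Thus Bob recovers the true run on $I_B$ and hence the true answers to all $I_B$-queries; since the data structure is correct on all queries with probability $\ge 0.9$ over $\mathcal{D}$, these answers are correct with probability $\ge 0.9$, so the protocol has error $\le 0.1$. Bob then forwards the answers to Alice and both output them.

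For soundness I would have the players accept only if all three of the following pass. (i) Alice checks, with no communication, that $S\subseteq P_A$ and that the claimed contents agree with $M^{\ast}$ on $S$. (ii) Bob checks, with no communication, that $S$ is contained in the set $P_B'$ of cells his simulation probed. (iii) Alice (holding $P_A\setminus S$) and Bob (holding $P_B'$) run the zero-error sparse set disjointness protocol of Theorem~\ref{thmsparse} and accept only if the two sets are disjoint. The point of (iii) is that a cell of $P_A\cap P_B$ wrongly omitted from $S$ is still probed by Bob's simulation — if the simulation ever diverges from reality, the divergence occurs exactly at the first probe of such a cell — so it lands in $(P_A\setminus S)\cap P_B'$ and the test fails; checks (i) and (ii) rule out, respectively, cells placed in $S$ that are not probed in $I_A$ or carry wrong contents, and cells in $S$ not probed in $I_B$. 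Conversely, if all three pass then the simulation never probes a cell of $P_A\setminus S$, hence reads only correct values, hence is faithful a posteriori with $P_B'=P_B$; then (iii) gives $P_A\cap P_B\subseteq S$, (i) gives $S\subseteq P_A$, and (ii) gives $S\subseteq P_B$, forcing $S=P_A\cap P_B$ with correct contents, i.e. $z=Z(x,y)$. Since the disjointness protocol never errs, acceptance is a deterministic function of $(x,y,z)$ equal to the predicate ``$z=Z(x,y)$'', which is exactly the soundness condition of the model.

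It remains to bound the cost when $z=Z(x,y)$: Stage~1 is $O(|P_A\cap P_B|\cdot w)$; the only Stage-2 communication is the disjointness protocol, run on two sets that are genuinely disjoint and of size at most $|P_A|+|P_B|$, costing $O(|P_A|+|P_B|)$ bits in expectation by Theorem~\ref{thmsparse} (after the players exchange the two set sizes, $O(w)$ bits, so they agree on the size parameter), together with $O(|I_B|\log p)$ bits for Bob to report the answers — a lower-order contribution to be absorbed in the final accounting. Taking expectation over $\mathcal{D}$ gives the claimed cost $O\bigl(\E_{\mathcal{D}}[|P_A|+|P_B|+|P_A\cap P_B|\cdot w]\bigr)$. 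The main obstacle, and where essentially all the care goes, is the soundness analysis above: Bob's simulation can silently go wrong as soon as Merlin deviates, and one must argue that a deviation always leaves behind a detectable witness — a probed cell exhibiting the missing intersection element — so that the two cheap local checks plus a single run of sparse set disjointness pin down the Merlin message uniquely while costing only $O(|P_A|+|P_B|)$ on top of the unavoidable $O(|P_A\cap P_B|\cdot w)$.
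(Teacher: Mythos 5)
Your proposal is correct and follows essentially the same strategy as the paper: Merlin identifies the cells of $P_A\cap P_B$ so that Bob can patch his stale memory image and faithfully re-run $I_B$, cheap local checks catch a Merlin who over-claims, and the zero-error H\aa{}stad--Wigderson sparse set disjointness protocol (Theorem~\ref{thmsparse}) catches a Merlin who omits an intersection cell, with the same ``first divergent probe leaves a witness in $(P_A\setminus S)\cap P_B'$'' argument. The only cosmetic difference is that you have Merlin send the set $S=P_A\cap P_B$ together with its contents up front ($O(|P_A\cap P_B|\cdot w)$ bits), whereas the paper has Merlin stream one bit per first-time probe in $I_B$ (contributing the $|P_B|$ term) and lets Alice supply each needed $O(w)$-bit content interactively; both variants meet the stated cost bound.
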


\begin{proof}

We prove the lemma by showing the following protocol is efficient in terms of $P_A$ and $P_B$. 

\paragraph*{Communication protocol:}

\begin{enumerate}[Step 1:]
	\setcounter{enumi}{0}
	\item
		(Merlin sends the key information)
		
		Merlin first simulates the data structure up to the beginning of $I_B$, which is also the end of $I_A$, and records the set $P_A$, all the cells that were probed in time period $I_A$. Then Merlin continues simulating the operations in $I_B$. At the meanwhile, every time he probes a memory cell, he checks if this cell has been probed in $I_B$ before and checks if it was probed in $I_A$ (in set $P_A$). If this is the first time probing this cell since the beginning of $I_B$, Merlin will send one bit to Alice and Bob. He sends ``1'' if the cell was probed in $I_A$, and sends ``0'' otherwise.
		
	\item
		(Alice and Bob simulate the data structure to accomplish the task)
		
		Alice simulates the data structure up to the beginning of $I_B$, and records $P_A$. Since Bob does not have any information about operations in $I_A$, he instead simulates up to the beginning of $I_A$, then tries to skip $I_A$ and simulate the operations in $I_B$ directly. Of course, the memory state Bob holds might be very different from what it should look like at the beginning of $I_B$. But with the help of Merlin's message, Bob will be able to figure out the difference. 
		
		As Bob simulates the data structure, every he needs to probe a cell, he first checks if this is the first time probing this cell since the beginning of $I_B$. If it is not, Bob knows its content from the last probe. Otherwise, he looks at the next bit of Merlin's message. If it is a ``0'', Merlin is claiming that this cell was not probed in $I_A$. Thus, its content has not been changed since the beginning of $I_A$. Bob has the information in his own copy of memory. If it is a ``1'', Bob sends the address of this cell to Alice, Alice will send back its content. At the same time, Alice checks if the cell was actually probed in $I_A$. If the check fails, they report ``Merlin is cheating'' (they reject), and abort the protocol. At last, Bob updates this cell in his own copy of memory, and records that it has been probed in $I_B$. 
		
		If there are no more bits left in Merlin's message when Bob needs to look at the next bit, or there are still unread bits when Bob has finished the simulation, the players reject. 
		
	\item
		(Players verify that Merlin is truthful)
		
		According to the simulation in Step 2, Alice takes the set $P_A$. Bob generates the set of cells that were probed in $I_B$ but Merlin claims that they were not in $P_A$ (and thus did not ask Alice for the contents). They check if these two sets of cells are disjoint (all cells that Merlin claims not probed in $I_A$ are actually not) using the \emph{zero-error} sparse set disjointness protocol in Theorem~\ref{thmsparse} of H\aa{}stad and Wigderson. If the two sets intersect, they report ``Merlin is cheating'' (reject), otherwise they report ``Merlin is truthful'' (accept) and Bob returns the answers he has computed for all queries in $I_B$. 
		
\end{enumerate}

Step 1 above describes what Merlin is \emph{supposed} to do, and thus defines $Z(x, y)$. 
\iftoggle{conf}{
We can show that the protocol is a valid protocol in \mm{}, and solves the communication game efficiently. The detailed proof can be found in the full version.
}{
The following shows that the above protocol is a valid protocol in \mm{}, and solves the communication game efficiently. 

\begin{enumerate}
\item
	If Merlin tells the truth ($z=Z(x, y)$), it is not hard to see that the players will always accept. If $z$ is a prefix of $Z(x, y)$ or $Z(x, y)$ is a prefix of $z$, Bob will detect it in Step 2 and reject. Otherwise let the $i$-th bit be the first bit that $z$ and $Z(x, y)$ differ. As the first $i-1$ bits are the same in $z$ and $Z(x, y)$, Bob will correctly simulate the data structure up to that point, right before a probe that causes Bob to read the $i$-th bit of $z$. Thus the cell probed by the data structure corresponding to the $i$-th bit will be the same in Bob's simulation and in the actual execution.  If $Z_i(x, y)=0,z_i=1$, the cell is not probed in $I_A$ but Merlin claims it is. The players can detect the mistake and will reject in Step 2. If $Z_i(x, y)=1,z_i=0$, Merlin claims the cell is not probed in $I_A$ but it is. In this case, the cell will belong to both Alice's and Bob's sets in Step 3. Therefore, the sparse set disjointness protocol will return ``intersect''. The players will reject. This proves that Alice and Bob accept if and only if $z=Z(x, y)$. 
\item
	Given $z=Z(x, y)$, Bob will successfully simulate the data structure on all operations in $I_B$. As the data structure correctly answers all queries simultaneously with $\geq 90\%$ probability, in particular, it answers all queries in $I_B$ correctly with $\geq 90\%$ probability. Thus, the error probability is no more than $0.1$. 
\item
	Given $z=Z(x, y)$, Merlin sends exactly one bit for each cell in $P_B$, $|z|=|P_B|$. In Step 2, the players send $O(w)$ bits for every ``1'' in $z$, which is $O(|P_A\cap P_B|\cdot w)$ in total. In Step 3, by Theorem~\ref{thmsparse}, the players send $O(|P_A|+|P_B|)$ bits in expectation to compute sparse set disjointness. in expectation over the randomness of the protocol and the input distribution $\mathcal{D}$, the protocol uses 
	\[O\left(\E_{\mathcal{D}}\left[|P_A|+|P_B|+|P_A\cap P_B|\cdot w\right]\right)\] bits of communication as we claimed. 
\end{enumerate}

This proves the lemma.
}

\end{proof}

Let $s$ be a binary string of length less than $b=\log B$. Define $I_s$ to be the interval consisting operations $\{U_t, Q_t: \textrm{$s$ is a prefix of $t$}\}$. Let $G(s)=G(I_{s0},I_{s1})$ be the communication game defined by $I_{s0}$ and $I_{s1}$, e.g., in game $G(\emptyset)$, Alice receives all operations in the first half of the sequence as her input, and Bob receives the second half, in game $G(0)$, Alice receives the first quarter and the second half, Bob receives the second quarter and the second half. 
\begin{lemma}\label{commlower}
For $p\geq B$ large enough, the communication game $G(s)$ requires communication cost at least $\Omega(2^{-|s|}KB(b-|s|))$ for any protocol with error $0.1$ in \mm{}, where $|s|$ is the length of string $s$. 
\end{lemma}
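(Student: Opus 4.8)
The plan is to reduce $G(s)$ to a clean ``core'' communication problem, eliminate the prover Merlin by exploiting the \emph{perfect} verification that \mm{} demands, and then prove a distributional lower bound for the core problem by combining a discrepancy estimate with a direct‑sum argument. For Step~1, write $I_A=I_{s0}$, $I_B=I_{s1}$, and observe that all operations outside $I_s$ are held by both players. Conditioning on these shared operations being a fixed value makes Alice's remaining input (the operations in $I_{s0}$) independent of Bob's remaining input (those in $I_{s1}$), so we work in a product distribution. Under the bit‑reversal map, the updates in $I_{s0}$ write, once each and to independent uniform values, into the set $S_0=\{\rev(t): s0\text{ a prefix of }t\}$ of $M:=2^{b-|s|-1}$ positions, and no update outside $I_{s0}$ ever writes into $S_0$; hence Alice's essential input is a uniformly random $a\in\mathbb F_p^{K\times M}$. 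For each of the $M$ queries $Q_t$ in $I_{s1}$, with index vector $\mathbf j^{(t)}$ (known to Bob), its answer equals $\sum_i\sum_{l\le j^{(t)}_i,\ l\in S_0}a_{i,l}$ plus a quantity Bob can compute from his own operations. So $G(s)$ is equivalent to the core problem: Alice holds $a$, Bob holds $(\mathbf j^{(t)})_{t\in[M]}$ with independent uniform coordinates, and they must output $\bigl(\langle N_t,a\rangle\bigr)_{t\in[M]}$, where $N_t\in\{0,1\}^{K\times M}$ is the prefix indicator of $\mathbf j^{(t)}$ (its $(i,l)$‑entry is $\mathbf 1[l\le j^{(t)}_i]$). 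The target bound $\Omega(2^{-|s|}KB(b-|s|))$ equals $\Omega(KM(b-|s|))$.

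Step~2 handles Merlin. Fix the public randomness to a value preserving the expected cost up to a constant. The defining feature of \mm{} is that the players accept iff $z=Z(x,y)$ and $Z$ is a \emph{function} of the input; consequently, for a fixed honest message $z$ and a fixed subsequent Alice--Bob transcript $\tau$ that ends in ``accept'', the set of inputs producing $(z,\tau)$ is a combinatorial rectangle (it is automatically contained in $Z^{-1}(z)$, since an input producing an accepting transcript under message $z$ must satisfy $z=Z(x,y)$), and the value returned depends only on $(z,\tau)$ and, without loss of generality, Bob's input. As the total communication is $|z|+|\tau|$ with expectation at most $C$, Markov's inequality together with averaging over the at most $2^{O(C)}$ short transcripts produces a single rectangle $R^\ast=X^\ast\times Y^\ast$ with $\mu(R^\ast)\ge 2^{-O(C)}$ and a map $g\colon Y^\ast\to\mathbb F_p^M$ such that the answer vector equals $g(y)$ on at least half of the conditional mass of $R^\ast$. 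It therefore suffices to show $\mu(R^\ast)\le 2^{-\Omega(KM(b-|s|))}$, which yields $C=\Omega(KM(b-|s|))$.

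Step~3 is the distributional lower bound. Since the answer vector is linear in $a$, correctness of a guessed answer on $X\times Y$ reads ``$\langle N_t,a\rangle=c_t(y)$ for all $t$'', with $c_t(y)$ determined by $y$. The single‑query heart is a discrepancy estimate: passing to an additive character $\chi$ of $\mathbb F_p$, the correlation of a rectangle with $a\mapsto\langle N_t,a\rangle$ is governed by Fourier coefficients $\widehat{\mathbf 1_{X}}(N_t)$, and because the map $\mathbf j\mapsto N(\mathbf j)$ is \emph{injective}, the prefix indicators occupy $\approx M^K$ distinct frequencies; Parseval then forces the Fourier mass of $\mathbf 1_{X}$ to spread over all of them, giving $\mathrm{disc}_\mu\le \sqrt{\mu(X\times Y)}\cdot M^{-K/2}$ for one query. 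This is exactly the mechanism behind the $\Omega(n)$ discrepancy bound for inner product (with the $M^K$ prefix indicators playing the role of all of $\{0,1\}^n$), and it gives single‑query cost $\Omega(K(b-|s|))$. To gain the remaining factor $M$ I would invoke a direct‑sum argument for \mm{}: Bob's $M$ query vectors are independent, so the information the transcript must carry about Bob's input is additive over the $M$ queries, hence the core problem costs $M$ times the single‑query bound; Merlin does not interfere because $z=Z(x,y)$ is a deterministic function of the input and therefore contributes only $O(|z|)$ to the information cost. Tracking constants through Step~2's rectangle then yields $\mu(R^\ast)\le 2^{-\Omega(KM(b-|s|))}$.

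I expect Step~3, and specifically the direct sum, to be the main obstacle. The target exceeds the bit‑length of the entire answer vector by a factor of $K$, so no argument of the form ``Bob must learn the answers'' can succeed; the $K$ must come from the discrepancy‑style analysis, which is available only because Bob's query points are random and so the linear forms being evaluated vary across a rectangle --- and it is precisely here that \mm{}'s \emph{perfect} verification is essential, since it rules out the sumcheck‑style shortcut that makes inner product cheap in $\mathrm{MA}^{\mathrm{cc}}\cap\mathrm{co\text{-}MA}^{\mathrm{cc}}$. Two further technicalities need care: the prefix indicators are far from generic (only $M$ of them in each coordinate block, and nearby indices give nearly parallel vectors), so the Fourier and anti‑concentration estimates must use the combinatorial structure of prefixes rather than treat the $N_t$ as random vectors; and the direct sum has to be carried out even though Alice's input $a$ is \emph{shared} across the $M$ queries, so the core problem is not a disjoint union of copies --- making this work, in a model weak enough to admit rectangle and discrepancy arguments, is where the bulk of the work lies.
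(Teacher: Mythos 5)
Your Steps 1 and 2 track the paper closely: the paper also reformulates $G(s)$ as the \mi{} problem (Alice holds a uniform $x\in\mathbb F_p^{LK}$, Bob holds $L$ prefix-indicator-type vectors, $L=2^{b-|s|-1}$), and it also eliminates Merlin exactly as you do, using the fact that acceptance happens iff $z=Z(x,y)$ so that each pair (Merlin message, accepting Alice--Bob transcript) carves out a combinatorial rectangle inside $Z^{-1}(z)$, leaving one nearly monochromatic rectangle of measure $2^{-O(C)}$ to be bounded.

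Step 3 is where the proposal has a genuine gap, and you have in fact located it yourself: the discrepancy estimate gives only the single-query bound $\Omega(K(b-|s|))$, and the factor $L$ is supposed to come from a direct-sum argument that you assert rather than prove. This is not a technicality. Distributional direct sum for communication does not hold in general, and the usual information-complexity route is blocked here for exactly the reason you flag --- Alice's input $x$ is shared across all $L$ queries, so the problem is not a product of independent instances, and ``the transcript must carry additive information about Bob's input'' is not a valid inference (the protocol only needs to produce $L\log p$ bits of output, far less than $LK(b-|s|)$). The paper never performs a direct sum and never uses Fourier analysis. Instead it bounds the measure of a single $0.7$-monochromatic rectangle $X\times Y$ directly by a rank argument (Lemma~\ref{almonorect}): put a measure $\nu$ on Bob's columns proportional to the number of new dimensions each $L$-tuple adds to the span of all vectors appearing in $Y$, and let $r$ be that span's dimension. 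If $r\ge 0.1LK$, any row that is $0.5$-monochromatic under $\nu$ must satisfy at least $r/2$ independent linear constraints over $\mathbb F_p$, so at most $2^r p^{-r/2}$ of Alice's inputs qualify, giving $\mu(R)\le 2^{-\Omega(LK\log p)}$. If $r<0.1LK$, the ``\es{}'' property of typical Bob inputs (your two technical worries about the non-generic structure of prefix indicators are resolved here, by first discarding the $5\%$ of non-\es{} inputs) shows a subspace of dimension $r$ can contain at most a $2^{-\Omega(LK\log L)}$ fraction of Bob's inputs. It is this dichotomy, not a direct sum, that produces the extra factor of $L$; without supplying an argument of comparable strength, your Step 3 does not yield the claimed bound.
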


We will defer the proof of Lemma~\ref{commlower} to Section~\ref{sectcomm}.
\iftoggle{conf}{Using these two lemmas, we will be able to prove Theorem~\ref{thmbps}. As the proof is similar to~\cite{PD04a,PT11}, we omit it in the conference version. See the full version for more details.}
{Using these two lemmas, we are ready to prove our data structure lower bound.

\begin{proof}[Proof of Theorem~\ref{thmbps}]
Fix a (randomized) data structure for \bps{} problem, which errors with probability no more than $0.05$, and in expectation, probes $T$ cells on an operation sequence drawn from $\mathcal{D}$. By Markov's inequality and union bound, there is a way to fix the random bits used by the data structure, such that the error probability is no more than $0.1$, and probes at most $3T$ cells in expectation. In the following, we show that for such deterministic data structure, $T$ must be large.

For binary string $s$ of length no more than $\log B$, define $P_s$ to be the (random) set of cells probed by the data structure in $I_s$. For every $s$, Lemma~\ref{commupper} and Lemma~\ref{commlower} together implies that \[
\E_{\mathcal{D}}\left[|P_{s0}|+|P_{s1}|+|P_{s0}\cap P_{s1}|\cdot w\right]\geq \Omega\left(2^{-|s|}KB(b-|s|)\right).
\]

Now sum up the two sides over all binary strings $s$ of length at most $b-1$. For the left-hand-side, fix an operation sequence. In the sum $\sum_{s} (|P_{s0}|+|P_{s1}|)$, every probe will be counted at most $\log B$ times, because the probes during $U_t$ or $Q_t$ will be counted only when $s$ is a prefix of $t$. In the sum $\sum_s |P_{s0}\cap P_{s1}|$, for each cell in $|P_{s0}\cap P_{s1}|$, we refer it to its first probe in $I_{s1}$. Every probed will be referred to at most once: consider a probe during $U_t$ or $Q_t$, assume the last probe to this cell happened during $U_{t'}$ or $Q_{t'}$ for some $t'<t$, this probe will be referred only when $s0$ is a prefix of $t'$ and $s1$ is a prefix $t$, i.e., $s$ is the longest common prefix of $t'$ and $t$. Therefore, the left-hand-side sums up to at most $3T\cdot (w+\log B)$. The right-hand-side sums up to
\[
	\sum_{s:|s|<b} \Omega(KB2^{-|s|}(b-|s|))=\sum_{|s|=0}^{b-1}\Omega(KB(b-|s|))=\Omega(KB\log^2 B)
\]

This implies $T\geq \Omega(KB\log^2 B/(w+\log B))$, which proves the theorem.
\end{proof}
}

\section{Communication Lower Bound}\label{sectcomm}

Before proving the communication lower bound for the game $G(s)$ itself, we first do a ``clean-up'' to make the problem more accessible. In particular, we show that the communication game we want to prove a lower bound for is essentially the \emph{\mi{}} problem. 

In the \mi{} problem, Alice is given a vector $x\in \mathbb{F}_p^{LK}$. Bob is given an $L$-tuple of vectors $y=(y_1,y_2,\ldots,y_L)$, such that $y_i\in \mathbb{F}_p^{LK}$ for each $i\in[L]$. Moreover, if we divide the $LK$ coordinates into $K$ blocks of $L$ coordinates each in the most natural way (first block is the first $L$ coordinates, second block is the next $L$ coordinates, etc), each $y_i$ will be a $\{0,1\}$-vector with \emph{at most} one $1$ in each block. Their goal is to compute the $L$ inner products over $\mathbb{F}_p$: $f(x, y)=(\left<x,y_1\right>,\left<x,y_2\right>,\ldots,\left<x,y_L\right>)$. In other words, Alice gets an array, Bob gets $L$ sets of indices of the array (of some restricted form). They want to figure out together for each set, what is the sum of elements in the corresponding entries. 

\iftoggle{conf}{
We claim that the communication game $G(s)$ is equivalent to the \mi{} problem with the same parameters $K$ and $p$ as in $G(s)$, parameter $L=B/2^{|s|+1}$ and input distribution $\mu$. Moreover, $\mu$ is ``close'' to the uniform distribution. The detailed proof of equivalence can be found in the full version.

}{
Fix a communication game $G(s)$ defined by $I_A=I_{s0}$ and $I_B=I_{s1}$. By the way we set up the hard distribution $\mathcal{D}$, every update operation will always update the set of entries, only the values change. Therefore, the only thing about the sequences Bob does not know is the values in the entries that are updated in $I_A$. As Bob knows the values in all other entries right before each query, the players' actual goal is to figure out the prefix sums as if there were only those unknown entries, which can be formulated as an instance of the \mi{} problem. Moreover, input distribution $\mathcal{D}$ will induce an input distribution for the \mi{} problem. We just need to prove a communication lower bound under that distribution. 

More specifically, let $L=B/2^{|s|+1}=|I_A|=|I_B|$. Define the following function $F$ which maps a sequence of operations $(U_0,Q_0,\ldots,U_{B-1},Q_{B-1})$, which is a possible outcome of $\mathcal{D}$, to an instance of the \mi{} problem. Consider all updates in interval $I_A=I_{s0}$, let $\mathcal{E}_k$ be the set of entries of sequence $k$ which are updated in $I_A$. To get a \mi{} instance, we set the $l$-th coordinate in $k$-th block of Alice's input $x$ to be the sum of values in $l$ first entries in $\mathcal{E}_k$ (the $l$ entries with smallest indices), for $l\in [L]$ and $k\in [K]$. For Bob's input $y_i$, consider the $i$-th query in $I_B$, let it be $\texttt{query}(j_{i,1},\ldots,j_{i,K})$, querying the sum of first $j_{i,k}$ numbers in sequence $k$. For $k\in [K]$, assume there are $l$ entries in $\mathcal{E}_k$ with indices at most $j_{i,k}$, which will be summands in the $i$-th query. If $l=0$, we set all coordinates in the $k$-th block of $y_i$ to $0$, otherwise, we set the $l$-th coordinate in the block to $1$. This defines the function $F$. It is not hard to see that the inner product $\left<x,y_i\right>$ encodes the dependence of $i$-th query in $I_B$ on updates in $I_A$. Moreover, it is easy to verify that under the mapping of $F$, the distribution over operation sequences $\mathcal{D}$ induces a distribution over the input pairs $(x, y)$ for \mi{}, with some probability measure $\mu$, such that

\begin{enumerate}
	\item
		$x$ and $y$ are independent under $\mu$, i.e., $\mu=\mu_x\times \mu_y$ is a product distribution;
	\item
		$\mu_x$ is the uniform distribution over $\mathbb{F}_p^{LK}$;
	\item
		$\mu_y$ is close to being uniform: all $K$ blocks in all $L$ vectors in $y$ are independent, and in each block, each one of the $L+1$ possibilities will occur with probability no more than $1/L$, as adjacent elements in $\mathcal{E}_k$ are spaced by exactly $B/L-1$ numbers. In particular, $\mu_y(\{y\})\leq L^{-LK}$ for any singleton. 
\end{enumerate}

In the following, we will only use the above three properties of $\mu$ in the proofs. }
In this setting, Alice's input carries $O(LK\log p)$ bits of information. Bob's input carries $O(LK\log L)$ bits of information. The following lemma shows that the best strategy is just to let one of the players send the whole input to the other, even with Merlin's help in \mm{}.

\begin{lemma}\label{lowermi}
For large enough $p$ and $L$, solving the \mi{} problem in \mm{} with error $0.15$ requires communication cost $\Omega\left(\min\{LK\log p, LK\log L\}\right)$ under input distribution $\mu$.
\end{lemma}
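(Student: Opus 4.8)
My plan is to prove the bound by the \emph{discrepancy method}, which --- crucially --- still applies in \mm{}, because the uniqueness of Merlin's message forces a rectangle partition of the input space, and then to exploit the product structure of the \mi{} problem so that discrepancy tensorizes over the $K$ blocks.

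First I would extract that rectangle structure. Fix the public randomness $R=r$ and a possible Merlin message $z$; the resulting deterministic Alice--Bob protocol has accepting leaves that are disjoint combinatorial rectangles whose union, by the defining property of \mm{}, is exactly $Z^{-1}(z)$. Since $Z$ is a \emph{function} of $(x,y)$, the sets $\{Z^{-1}(z)\}_z$ partition $\mathcal X\times\mathcal Y$, hence so do the accepting leaves taken over all $z$; each leaf is named by a pair (Merlin message, Alice--Bob transcript). Applying Markov's inequality to $\E_{\mathcal D}[\,|z|+|\Pi_{\mathrm{AB}}|\,]\le C$ shows that all but a constant fraction of the $\mu$-mass lies in leaves of total description length $O(C)$, of which there are $2^{O(C)}$; a further Markov step fixes $r$ so the protocol's error stays below some absolute constant $<1$. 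This produces a partition of $\mathcal X\times\mathcal Y$ into $2^{O(C)}$ labelled rectangles whose labels agree with $f$ on all but an $O(1)$-fraction of $\mu$. (So \mm{}, despite its nondeterminism, is no stronger than randomized communication for the purpose of the discrepancy bound --- this is exactly where uniqueness of $Z$ is used.)

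Given this, for any nontrivial additive character $\chi_c(v)=\exp(2\pi i\langle c,v\rangle/p)$ of $\mathbb F_p^{L}$ the usual two-line discrepancy estimate yields
\[
1-O(1)\;\le\;\Bigl|\E_{\mu}\bigl[\chi_c(f(x,y))\,\overline{\chi_c(\mathrm{answer}(x,y))}\bigr]\Bigr|\;\le\;2^{O(C)}\cdot\mathrm{disc}_{\chi_c}(f,\mu)+O(1),
\]
where $\mathrm{disc}_{\chi_c}(f,\mu)=\max_{\text{rectangles }R}\bigl|\E_{\mu}[\mathbf 1_R\,\chi_c(f)]\bigr|$, so (once the error and the leftover constants are taken small enough) $C=\Omega\!\left(\log\tfrac{1}{\mathrm{disc}_{\chi_c}(f,\mu)}\right)$. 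Now I use the block decomposition: writing the $LK$ coordinates as $K$ blocks, Alice's $x=(x^{(1)},\dots,x^{(K)})$ has the $x^{(k)}$ i.i.d.\ uniform on $\mathbb F_p^{L}$, Bob's $y$ has its $K$ blocks independent (property~3 of $\mu$), and $f(x,y)=\sum_{k=1}^K g(x^{(k)},y^{(k)})$, where $g$ is the single-block problem (Alice holds an array in $\mathbb F_p^L$, Bob an $L$-tuple of indices into it, output the $L$ looked-up values). Then $\chi_c(f)=\prod_k\chi_c(g^{(k)})$, the matrix $[\chi_c(f(x,y))]$ is a Kronecker product over $k$, and the spectral (eigenvalue) upper bound on discrepancy --- which is exactly multiplicative under Kronecker products --- gives $\mathrm{disc}_{\chi_c}(f,\mu)\le \mathrm{disc}_{\chi_c}(g,\mu^{(1)})^{K}$. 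It remains to show, for a good $c$, that $\mathrm{disc}_{\chi_c}(g,\mu^{(1)})\le \min\{p,L\}^{-\Omega(L)}$.

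The step I expect to be the main obstacle is this last one. I would take $c=(1,\dots,1)$, so $\langle c,g(a,b)\rangle=\sum_{j=1}^L\hat a_{b_j}$ is the sum of the $L$ looked-up values ($\hat a_0:=0$), and bound $\mathrm{disc}_{\chi_c}(g,\mu^{(1)})\le\bigl\|D_{\mu_x}^{1/2}MD_{\mu_y}M^{\dagger}D_{\mu_x}^{1/2}\bigr\|^{1/2}$ for $M=[\chi_c(g(a,b))]_{a,b}$; because $M_{a,b}=\prod_j\omega^{\hat a_{b_j}}$ factors over the $L$ query slots and $\mu_y^{(1)}$ is a product over them, $(MD_{\mu_y}M^{\dagger})_{a,a'}=\prod_{j=1}^L s_j(a-a')$ with $s_j(d)=\mu_{y,j}(0)+\sum_{l=1}^L\mu_{y,j}(l)\,\omega^{d_l}$ of modulus at most $1$ (equality only when the active phases $\omega^{d_l}$ all coincide). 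A trace computation then reduces everything to the estimate $\E_{d\sim\mathrm{Unif}(\mathbb F_p^L)}\bigl[\prod_{j=1}^L|s_j(d)|^2\bigr]\le\min\{p,L\}^{-\Omega(L)}$ --- i.e.\ to showing that ``the sum of $L$ independent near-uniform lookups'' genuinely spreads its mass over all $L$ queries. The delicate part is that $\mu_y^{(1)}$ is only near-uniform (property~3: each of the $L+1$ slot-values has probability $\le 1/L$) rather than exactly uniform, so the factors $|s_j(d)|$ are only \emph{typically} of order $L^{-1/2}$; the correlations among them and the rare $d$ with some $|s_j(d)|$ near $1$ must be controlled by hand, e.g.\ via a second-moment/Fourier argument showing that for all but a $\min\{p,L\}^{-\Omega(L)}$ fraction of $d$ the array $\hat d$ has small collision multiplicity, making each lookup near-uniform so the product is genuinely $\min\{p,L\}^{-\Omega(L)}$. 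Granting this, $\mathrm{disc}_{\chi_c}(g,\mu^{(1)})\le\min\{p,L\}^{-\Omega(L)}$, hence $\mathrm{disc}_{\chi_c}(f,\mu)\le\min\{p,L\}^{-\Omega(LK)}$, and $C=\Omega(LK\log\min\{p,L\})=\Omega(\min\{LK\log p,\,LK\log L\})$, as claimed.
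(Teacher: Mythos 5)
Your opening move---using the uniqueness of $Z(x,y)$ and the zero-error acceptance condition to partition $\mathcal X\times\mathcal Y$ into $2^{O(C)}$ labelled rectangles, then two applications of Markov---is exactly the paper's first step and is correct. The gap is everything after that: the discrepancy method \emph{under $\mu$} provably cannot give the claimed bound. There is an explicit, exactly monochromatic rectangle: fix $S\subseteq[LK]$ with $L/\log L$ coordinates in each block, let $X=\{x: x_j=0\ \forall j\in S\}$ and $Y$ be the Bob-inputs whose ones all lie in $S$; then $f\equiv(0,\dots,0)$ on $X\times Y$ and $\mu(X\times Y)=\Theta\bigl(2^{-LK(\log p/\log L+\log\log L)}\bigr)$. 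Since $\bigl|\E_\mu[\mathbf 1_{X\times Y}\,\chi_c(f)]\bigr|=\mu(X\times Y)$ for every character $\chi_c$, we get $\mathrm{disc}_{\chi_c}(f,\mu)\geq 2^{-O(LK\log\log L)}$ whenever $p=L^{O(1)}$---which is the regime the reduction actually needs ($p=\Theta(B)$ and $L$ as large as $B/2$). So no discrepancy argument against the unrestricted $\mu$ can yield more than $O(LK\log\log L)$, short of the required $\Omega(LK\log L)$. This is precisely the obstruction the paper points out, and its fix is to discard atypical Bob-inputs: it defines \es{} inputs, shows they have measure $\geq 0.95$ (Lemma~\ref{lemES}), and only bounds $0.7$-monochromatic rectangles \emph{disjoint from the bad set $E$} (Lemma~\ref{almonorect}). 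Your proposal never restricts to \es{} inputs, so it runs straight into this counterexample; and restricting to them would destroy the product structure across blocks that your Kronecker tensorization relies on.

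Independently, the single-block estimate you flag as ``the main obstacle'' is in fact false. Because $\mu_x$ is uniform, $M D_{\mu_y} M^{\dagger}$ is a convolution operator on $\mathbb F_p^{L}$ (its $(a,a')$ entry depends only on $d=a-a'$), so the operator norm of $D_{\mu_x}^{1/2} M D_{\mu_y} M^{\dagger} D_{\mu_x}^{1/2}$ equals $\max_{\xi}\Pr_b[\,n(b)\equiv\xi \pmod p\,]$, where $n(b)_m$ counts how many of the $L$ lookups hit position $m$. Taking $\xi$ to be the all-ones vector gives the permanent of the $L\times L$ matrix $(\mu_{y,j}(m))_{j,m}$, which is about $L!\,L^{-L}=e^{-\Theta(L)}$ when the entries are near $1/L$. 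So the spectral bound on the per-block discrepancy is $e^{-\Theta(L)}$, not $L^{-\Omega(L)}$; equivalently $\E_d\bigl[\prod_j|s_j(d)|^2\bigr]=e^{-\Theta(L)}$, dominated not by collisions in $d$ but by the $e^{-\Theta(L)}$-fraction of $d$ whose phases $\omega^{d_l}$ cluster in a half-plane (this fraction is vastly larger than $L^{-\Omega(L)}$). Tensorizing then yields only $C=\Omega(LK)$, losing exactly the $\log L$ factor that is the point of the lemma. The paper gets that factor by a completely different, non-analytic dichotomy on each nearly monochromatic rectangle $X\times Y$: if the span of Bob's vectors has dimension $r\geq 0.1LK$, each $0.5$-monochromatic row imposes $r/2$ independent linear constraints, so $\mu_x(X)\leq 2^{-\Omega(LK\log p)}$; if $r\leq 0.1LK$, the \es{} property forces $\mu_y(Y)\leq 2^{-\Omega(LK\log L)}$ because too few \es{} tuples fit inside a low-dimensional subspace.
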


\iftoggle{conf}{
}{
Before proving this lemma, we first show that it implies Lemma~\ref{commlower}. 

\begin{proof}[Proof of Lemma~\ref{commlower}]

Fix a protocol $P$ for $G(s)$ with error $0.1$ and cost $C$. We are going to use it to solve \mi{}. Let us first assume that there is a sequence of public random bits that all three parties can see. We will first design a protocol in this setting, then try to get rid of this extra requirement by fixing the random bits.

For an input pair $(x,y)\sim \mu$ for the \mi{} problem, consider the following protocol:
\paragraph*{Preprocessing:} Use the public randomness to sample an operation sequence from $\mathcal{D}$ conditioned on that $F$ maps it to $(x, y)$. It is easy to verify that all operations outside $I_B$ does not dependent on $y$ and all operations outside $I_A$ does not depend on $x$. Therefore, with no communication, all three parties get their inputs for $G(s)$. 
\paragraph*{Simulate $P$:} Alice and Bob run protocol $P$ to compute all answers to queries in $I_B$. 
\paragraph*{Postprocessing:} Bob knows all updates outside $I_A$, and from the value returned by the communication game, he gets to know the answers to all queries. Therefore, Bob can compute for $i$-th query in $I_B$, the sum of all entries updated in $I_A$ that are summands of the query, which is exactly $\left<x,y_i \right>$, by subtracting all other summands from the answer. With no further communication, Bob can figure out the solution to the \mi{} problem.

After Preprocessing, the inputs the players get for $G(s)$ will be distributed as $\mathcal{D}$. Therefore, in Simulate $P$, the communication cost will be $C$ in expectation, and error probability will be $0.1$ over the randomness of $P$, input $(x,y)$ and random bits $r$ used in Preprocessing. By Markov's inequality and union bound, there is a way to fix $r$, such that the communication cost is at most $4C$ in expectation and error probability is at most $0.15$ over the randomness of $P$ and $(x, y)$. To show that the protocol can be implemented in \mm{}, we hardwire $r$ and define for each input pair $(x, y)$, $Z(x, y)$ to be the message Merlin is supposed to send in Simulate $P$ when Preprocessing uses random bits $r$. Alice and Bob accept if and only if they were to accept in $P$. 

It is easy to verify that the above protocol solves the \mi{} problem under input distribution $\mu$ with error $0.15$ and cost $4C$. However, by Lemma~\ref{lowermi}, we have a lower bound of $\Omega(\min\{LK\log p, LK\log L\})$ on the communication cost. Together with $p\geq B\geq L$, we have $C\geq \Omega(LK\log L)=\Omega(2^{-|s|}KB(b-|s|))$. This proves the lemma.
\end{proof}
}

To prove a communication lower bound in \mm{}, the main idea is to use the uniqueness of the certificate ($Z(x, y)$) and the perfect completeness and soundness. To start with, let us first consider the case where the protocol is deterministic, and the communication cost $C$ is defined in worst case instead of in expectation. 

In this case, fix one Merlin's possible message $z$, it defines a communication problem between Alice and Bob in the classic model: check whether $Z(x, y)=z$. By definition, the players can solve this task with zero error. By the classic monochromatic rectangle argument, we can partition the matrix $M(f)$ (defined in \iftoggle{conf}{Appendix~\ref{sectpre_app}}{Section~\ref{sectpre}}) into exponentially in $C$ many combinatorial rectangles, such that in each rectangle, either $Z(x, y)=z$ for every pair or $Z(x, y)\neq z$ for every pair. In particular, it partitions the set $Z^{-1}(z)$ into combinatorial rectangles. Moreover, for each rectangle with $Z(x, y)=z$, the protocol associates it with a value, which is the value returned in Stage 4. Now we go over all possible $z$'s, which is again exponentially in $C$ many. Every input pair belongs to exactly one of the $Z^{-1}(z)$'s. By cutting all $Z^{-1}(z)$, along with their partitioning into rectangles, and pasting into one single matrix, it induces a partitioning of the \emph{whole} matrix $M(f)$ into $2^{O(C)}$ rectangles. The values associated with the rectangles should match the actual function values with high probability. Therefore, there must be large \emph{nearly} monochromatic rectangles in $M(f)$. \iftoggle{conf}{\footnote{See Appendix~\ref{sectpre_app} for definitions.}}{}

A nature final step of the proof, as in many communication complexity lower bound proofs, would be to show that all nearly monochromatic rectangles are small. However, in the \mi{} problem, there do exist large monochromatic rectangles.

Fix a set $S\subseteq [LK]$ of coordinates, such that it has $L/\log L$ coordinates in each block. Let $X=\{x:\forall j\in S, x_j=0\}$, $Y=\{(y_1,\ldots,y_L):\forall j\notin S, i\in [L], y_{i,j}=0\}$. $X\times Y$ is a monochromatic rectangle with value $(0,\ldots, 0)$, and $\mu(X\times Y)=\Theta(p^{-LK/\log L}\cdot (\log L)^{-LK})=\Theta(2^{-LK(\log p/\log L+\log\log L)})$. In particular, for $p=O(L)$, we can only prove lower bounds no better than $\Omega(LK\log\log L)$ using this approach only. 

However, these $y$'s are not what a ``typical'' Bob's input should look like. Since the ones in $\{y_1,\ldots,y_L\}$ only appear in $(1/\log L)$-fraction of the coordinates, while a random input with very high probability should have ones appearing in a constant fraction of the coordinates. This motivates the following definition of \emph{\es{}}:
\begin{definition}
Define Bob's input to be \emph{\es}, if for any set of coordinates of size at most $0.1LK$, the number of ones in all $L$ vectors in Bob's input in these coordinates is no more than $0.9LK$. 
\end{definition}

\iftoggle{conf}{By Hoeffding's inequality and union bound, we can prove that most of the inputs are \es{}. The detailed proof can be found in the full version.
}{}

\begin{lemma}\label{lemES}
	For $L\cdot K$ large enough, with probability $\geq 95\%$, Bob's input is {\es}. 
\end{lemma}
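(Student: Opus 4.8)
The plan is to apply Hoeffding's inequality together with a union bound over all candidate ``concentrated'' sets of coordinates. First observe that the number of ones of $y$ inside a set $T$ of coordinates only grows as $T$ grows, so it suffices to verify the \es{} condition for sets $T$ of size exactly $m:=\lfloor 0.1LK\rfloor$. Fix such a $T$ and let $N_T:=\sum_{i=1}^L |\{j\in T:y_{i,j}=1\}|$ be the total number of ones of $y=(y_1,\dots,y_L)$ lying in $T$; the bad event we must rule out is that $N_T>0.9LK$ for some $T$ with $|T|=m$.

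Next I would write $N_T$ as a sum of independent $[0,1]$-valued random variables. For $i\in[L]$ and block $k\in[K]$, let $X_{i,k}$ be the indicator that the (at most one) one of $y_i$ in block $k$ falls inside $T$, so that $N_T=\sum_{i\in[L],\,k\in[K]}X_{i,k}$ is a sum of $LK$ terms. Since under $\mu_y$ all $K$ blocks of all $L$ vectors are independent, and within each block each of the $L+1$ outcomes has probability at most $1/L$, the $X_{i,k}$ are mutually independent and $\Pr[X_{i,k}=1]\le |T\cap B_k|/L$, where $B_k$ denotes the $k$-th block. Summing over $k$ and $i$ gives $\E[N_T]\le |T|=m\le 0.1LK$, so Hoeffding's inequality yields $\Pr[N_T\ge 0.9LK]\le \Pr\big[N_T-\E[N_T]\ge 0.8LK\big]\le \exp\!\big(-2(0.8LK)^2/(LK)\big)=\exp(-1.28\,LK)$.

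Finally I would union bound over the at most $2^{LK}$ choices of $T$: the probability that $y$ fails to be \es{} is at most $2^{LK}\cdot\exp(-1.28\,LK)=\exp\big(-\Omega(LK)\big)$, which is below $5\%$ once $LK$ is large enough, proving the lemma. There is no genuine obstacle here; the only point needing care is the constant bookkeeping — one must check that the Hoeffding exponent ($1.28\,LK$) beats the logarithm of the number of sets in the union bound ($\le 0.7\,LK$), which works comfortably because $\E[N_T]$ is only $0.1LK$ while the threshold is $0.9LK$, leaving a deviation of $0.8LK$ to exploit.
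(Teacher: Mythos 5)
Your proof is correct and follows essentially the same route as the paper: Hoeffding's inequality applied to the sum of independent per-block indicator variables (each with mean at most $|T\cap B_k|/L$), giving the bound $e^{-1.28LK}$, followed by a union bound over at most $2^{LK}$ coordinate sets. The only cosmetic difference is that you restrict attention to sets of size exactly $\lfloor 0.1LK\rfloor$ via monotonicity, whereas the paper unions over all sets of size at most $0.1LK$; both work identically.
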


\iftoggle{conf}{}{
\begin{proof}
	Draw a random input $(x,y)$ from $\mu$, and try to upper-bound the probability that it is not {\es}. Fix a set of coordinates $S=S_1\cup S_2\cup \cdots \cup S_K$ of size at most $0.1LK$, where $S_k$ is a subset of coordinates in block $k$. Let $\xi_{ki}$ be the random variable $\left<1_{S_k}, y_i\right>$, which is the number of ones in $y_i$ that is in $S_k$. By the properties of $\mu$, in $k$-th block of $y_i$, each of the $L+1$ possibilities occurs with probability no more than $1/L$, and is independent of all other blocks and vectors. Thus, we have that $\E[\xi_{ki}]\leq |S_k|/L$, $\xi_{ki}\in [0,1]$, and $\xi_{ki}$'s are independent. Let $\xi=\sum_{k=1}^K\sum_{i=1}^L \xi_{ki}$, be the number of ones in all $L$ vectors in $S$. We have $\E[\xi]\leq 0.1LK$. Thus, by Hoeffding's inequality, 
	\[
		\Pr[\xi>0.9LK]\leq \Pr[\xi-E[\xi]>0.8LK]\leq e^{-32LK/25}.
	\]
	
	However, the number of possible set $S$'s is no more than $2^{LK}$. By union bound, the probability that there exists an $S$ violating the constraint is at most $e^{-32LK/25}\cdot 2^{LK}<0.05$, for large enough $L\cdot K$.
\end{proof}
}

Instead of upper-bounding how many input pairs a nearly monochromatic rectangle can contain, we are going to upper-bound the measure of \es{} inputs in it. Define set $E$ to be the all input pairs $(x, y)$ that $y$ is \emph{not} \es{}. By Lemma~\ref{lemES}, $\mu(E)\leq 0.05$. The following lemma shows that there are no large nearly monochromatic rectangles if we ignore all elements in $E$. 

\begin{lemma}\label{almonorect}
For large enough $p, L$, every $0.7$-monochromatic rectangle $R$, which is disjoint from $E$, must have \[\mu(R)\leq \max\{2^{-\Omega(LK\log L)},2^{-\Omega(LK\log p)}\}.\]
\end{lemma}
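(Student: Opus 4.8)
\textbf{Proof proposal for Lemma~\ref{almonorect}.}

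The plan is to fix a rectangle $R = X \times Y$ that is $0.7$-monochromatic under $\mu$, disjoint from $E$, and with some associated value $v = (v_1,\ldots,v_L) \in \mathbb{F}_p^L$, and to show $X$ and $Y$ cannot both be large. The key observation is that $Y$ being disjoint from $E$ means every $y \in Y$ is evenly-spreading, so the union of supports of $y_1,\ldots,y_L$ over all $y \in Y$ is large — intuitively, $Y$ ``tests'' Alice's vector $x$ on a constant fraction of coordinates in a structured way. First I would extract from $Y$ a good collection of query vectors: since each $y \in Y$ is evenly-spreading, one can greedily build a set of indices $i_1, i_2, \ldots$ and corresponding $y^{(1)}, y^{(2)}, \ldots \in Y$ (or use a single evenly-spreading $y$ and its $L$ component vectors $y_1, \ldots, y_L$) whose supports together cover at least $0.9LK - (\text{something})$ coordinates, more precisely at least $\Omega(LK)$ fresh coordinates in total when processed one at a time. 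The point of evenly-spreading is precisely that no small set of coordinates absorbs all the ones, so the incremental coverage stays healthy.

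Next I would use monochromaticity to pin down $x$. For a typical $x \in X$ and a typical $y \in Y$, we have $\langle x, y_i \rangle = v_i$ for all $i$ (with probability $\geq 0.7$ jointly over the rectangle under $\mu$; by averaging, for a $0.7$-fraction of rows $x$, and then restricting to those). Now fix such a good $x^* \in X$. The constraints $\langle x^*, y_i \rangle = v_i$ for the $y$'s we selected are linear equations on the coordinates of $x^*$ in the covered blocks. Because each $y_i$ has at most one $1$ per block, the support of $y_i$ picks out at most one coordinate per block, so the equation $\langle x^*, y_i\rangle = v_i$ reads ``the sum of $x^*$ over the $\leq K$ chosen coordinates equals $v_i$.'' The crucial combinatorial point is to argue that, restricted to $X$, the coordinates of $x$ that get covered are \emph{essentially determined}: if I order the selected query vectors so that each contributes a fresh coordinate not seen before (in some block), then the corresponding linear system is triangular and each new equation fixes one new coordinate of $x$ in terms of $v$ and previously-fixed coordinates. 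Hence for each $x \in X$ that is ``good'' (satisfies all these equations — a $0.7$-fraction at least), the value of $x$ on $\Omega(LK)$ coordinates is forced to equal a fixed vector depending only on $v$ and the query structure. Since $\mu_x$ is uniform over $\mathbb{F}_p^{LK}$, this forces $\mu_x(X) \leq (\text{fraction of good } x)^{-1} \cdot p^{-\Omega(LK)}$... wait — more carefully, the good rows all lie in an affine subspace of codimension $\Omega(LK)$, so $\mu_x(\text{good rows in } X) \leq p^{-\Omega(LK)}$, and since good rows are a $\geq 0.7$-fraction... hmm, the fraction is over the rectangle, so $\mu_x(X) \cdot 0.7 \leq$ mass of good rows, giving $\mu_x(X) \leq 2^{-\Omega(LK\log p)}$ — unless the set of selected query vectors we extracted from $Y$ was itself small, i.e., unless $|Y|$ is small. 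This is where the two cases in the $\max$ come from: either $Y$ is so small that $\mu_y(Y) \leq 2^{-\Omega(LK\log L)}$ (using the bound $\mu_y(\{y\}) \leq L^{-LK}$, so a rectangle with few $y$'s has small $y$-measure), or $Y$ is large enough that it contains enough evenly-spreading vectors to force $\Omega(LK)$ coordinates of $x$, making $\mu_x(X) \leq 2^{-\Omega(LK\log p)}$. In either case $\mu(R) = \mu_x(X)\mu_y(Y) \leq \max\{2^{-\Omega(LK\log L)}, 2^{-\Omega(LK\log p)}\}$.

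To make the dichotomy quantitative I would argue as follows: either $|Y|$ is below some threshold like $L^{c LK}$ for a small constant $c$, in which case $\mu_y(Y) \leq |Y| \cdot L^{-LK} \leq L^{-(1-c)LK} = 2^{-\Omega(LK\log L)}$ and we are done; or $|Y| \geq L^{cLK}$, and then a counting/entropy argument shows $Y$ cannot be supported on evenly-spreading vectors that all agree on a large coordinate set — more directly, I want to show a large $Y$ of evenly-spreading vectors admits a sub-collection whose supports incrementally cover $\Omega(LK)$ coordinates. Here is the cleaner route: take a single $y \in Y$ (any one, since $R \cap E = \emptyset$ means all are evenly-spreading). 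Its $L$ component vectors $y_1,\ldots,y_L$ have supports that are each a partial system of distinct representatives (one coordinate per block, in at most one block... no, one coordinate per block for at most every block). Evenly-spreading says: no set of $\leq 0.1LK$ coordinates contains $\geq 0.9LK$ of the ones across all $y_i$; equivalently the ones are spread so that you need $\geq 0.1 LK$... let me just say: process $y_1, y_2, \ldots, y_L$ in order, and let $c_i$ be the number of coordinates in $\mathrm{supp}(y_i)$ not in $\bigcup_{i'<i}\mathrm{supp}(y_{i'})$; then $\sum_i c_i = |\bigcup_i \mathrm{supp}(y_i)| \geq 0.1 LK$ by evenly-spreading (if the union were smaller than $0.1LK$ it would be a small set containing all $LK$-ish... need the total number of ones to be comparable to $LK$, which holds for typical $y$ but must be derived from evenly-spreading applied to the complement — actually evenly-spreading with the empty set: $0$ ones in $\leq 0.9LK$... that is trivially true, so I need a different extraction). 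So the honest main obstacle is exactly this extraction step: converting ``$y$ is evenly-spreading'' plus ``$Y$ is large'' into ``a codimension-$\Omega(LK)$ set of linear constraints is forced on $X$,'' handling the subtlety that a single $y$'s vectors might have small total support (many blocks empty) while evenly-spreading only constrains where the ones are, not how many there are. I expect to resolve this by noting that if too many blocks of too many $y_i$ are empty, one can choose a small coordinate set (e.g. the union of the few non-empty blocks) capturing all the ones, contradicting evenly-spreading; this forces $\Omega(LK)$ total ones, hence $\Omega(LK)$ covered coordinates after the greedy de-duplication, and the triangular linear system argument then goes through. Throughout, ``large enough $p, L$'' is used to absorb the additive $O(\log n)$-type slack in the Håstad–Wigderson subroutine and to make the constant-fraction Hoeffding/union bounds valid.
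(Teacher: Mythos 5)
Your high-level dichotomy (either Bob's side of the rectangle has small measure, or it forces many linear constraints on Alice's side) matches the paper's, but the dichotomy there is on the \emph{dimension} $r$ of the span of all vectors appearing in $Y$, not on $|Y|$, and both of your key technical steps have genuine problems. First, the constraint-extraction step does not survive approximate monochromaticity as you have set it up. A ``good'' row $x^*$ only satisfies $\langle x^*, y_l\rangle=v_l$ for a constant fraction (in measure) of the columns $y\in Y$, and \emph{which} columns it agrees with varies from row to row; so you cannot pre-select specific tuples $y^{(1)},y^{(2)},\ldots$ with fresh coordinates and then assert that every good row satisfies those particular equations. The paper's device for exactly this issue is to reweight the columns by a distribution $\nu$ proportional to the incremental dimension each tuple adds to the span ($\nu(y_i)=(r_i-r_{i-1})/r$): then \emph{any} subset $S$ of columns with $\nu(S)\geq 0.5$ spans dimension $\geq r/2$, so whichever half of the columns a given row agrees with, it satisfies $r/2$ independent constraints; a union bound over the at most $2^r$ subsets of $\mathrm{supp}(\nu)$ then gives $\mu_x(X)\leq 2^r p^{-r/2}$. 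Your proposal has no substitute for this, and the ``triangular system'' framing does not supply one.

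Second, your proposed resolution of what you call the ``honest main obstacle'' rests on a false claim: evenly-spreading does \emph{not} force $\Omega(LK)$ total ones or a large union of supports. The definition only upper-bounds the number of ones landing in any small coordinate set; a tuple with, say, $L$ ones all in a single coordinate has at most $0.9LK$ ones everywhere and is therefore evenly-spreading, yet its vectors span one dimension. (Relatedly, a single tuple's $L$ component vectors span at most $L\ll LK$ dimensions, so no single $y$ can ever yield $\Omega(LK)$ constraints.) The correct use of evenly-spreading is in the low-dimension case: if $r\leq 0.1LK$, fix a basis in reduced form with pivot set $T$, $|T|=r$; a tuple in the span is determined by its restriction to $T$, and evenly-spreading caps at $0.9LK$ the number of blocks whose one lies in $T$, giving at most $\binom{LK}{\leq 0.9LK}L^{0.9LK}$ tuples, hence $\mu_y(Y)\leq 2^{LK}L^{-0.1LK}$. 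This is also why a cardinality threshold on $|Y|$ alone cannot drive your second case: the paper's own example of a large monochromatic rectangle (all ones confined to $LK/\log L$ coordinates) shows that a low-dimensional coordinate subspace can contain a huge number of tuples of the required form, and only the evenly-spreading restriction, used via this counting, rules it out.
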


\iftoggle{conf}{
Using the above lemma, we are ready to prove Lemma~\ref{lowermi}. Both proofs can be found in the full version.
}
{
\begin{proof}
Fix a combinatorial rectangle $R=X\times Y$, such that there is a value $v=(v_1,\ldots,v_L)$ that $\mu(R\cap f^{-1}(v))\geq 0.7 \mu(R)$. We want to prove that $\mu(R)$ must be small. First, without loss of generality, we can assume that for every $y\in Y$, $\mu((X\times \{y\})\cap f^{-1}(v))\geq 0.6 \mu(X\times \{y\})$, i.e., every column in $R$ is $0.6$-monochromatic. Since in general, by Markov's inequality, at least $1/4$ (with respect to $\mu_y$) of the columns in $R$ are $0.6$-monochromatic, we can just apply the following argument to the subrectangle induced by $X$ and these columns, and only lose a factor of $4$. 

Let $|Y|=q, Y=\{y_1,y_2,\ldots,y_q\}$, $y_i=(y_{i1},\ldots,y_{iL})$. Let $r_i$ be the dimension of subspace in $\mathbb{F}_p^{LK}$ spanned by vectors in first $i$ $L$-tuples: $\{y_{jl}:1\leq j\leq i,l\in [L]\}$, and $r_0=0$. Let $r=r_q$ be the dimension of the subspace spanned by all vectors in $Y$. Define $\nu$ to be the probability distribution over $Y$ such that $\nu(y_i)=(r_i-r_{i-1})/r$. $\nu(y_i)$ is proportional to ``the number of new dimensions $y_i$ introduces''. Thus, $\nu$ is supported on no more than $r$ elements in $Y$.

Since every column in $Y$ is 0.6-monochromatic under $\mu_x$, $R$ will also be 0.6-monochromatic under $\mu_x\times \nu$. By Markov's inequality again, at least 1/5 of the rows are 0.5-monochromatic in $R$ under $\mu_x\times \nu$. However, when $r$ is large, there cannot be too many such rows, even in the whole matrix. For some 0.5-monochromatic row $x$, let $S\subseteq Y$ be the set of columns with value $v$ in that row. By definition, we have $\nu(S)\geq 0.5$. By the way we set up the distribution $\nu$, the linear space spanned by all vectors in $S$ must have dimension at least $r/2$. This adds at least $r/2$ independent linear constraints on $x$. There can be at most $p^{-r/2}$-fraction of $x$'s satisfying all linear constraints in the whole matrix. By taking a union bound on all possible $S$'s, we obtain an upper bound on the measure of 0.5-monochromatic rows in $R$ under $\mu_x\times \nu$. More formally, we have

\[
	\begin{aligned}
		\frac{1}{5}\mu_x(X)&\leq \mu_x\left(\{x:\nu(\{y:f(x,y)=v\})\geq 0.5\}\right) \\
		&=\Pr_{x\sim \mu_x}\left[\Pr_{y=(y_1,\ldots,y_L)\sim \nu}\left[\forall l\in [L],\left<x,y_l\right>=v_l\right]\geq 0.5\right]\\
		&=\Pr_{x\sim \mu_x}\left[\exists S\subseteq \mathrm{supp}(\nu),\nu(S)\geq 0.5,\forall y\in S,\forall l\in [L],\left<x,y_l\right>=v_l\right]\\
		&\leq \sum_{\stackrel{S\subseteq \mathrm{supp}(\nu)}{\nu(S)\geq 0.5}}\Pr_{x\sim \mu_x}\left[\forall y\in S,\forall l\in L,\left<x,y_l\right>=v_l\right] \\
		&\leq \sum_{\stackrel{S\subseteq \mathrm{supp}(\nu)}{\nu(S)\geq 0.5}}p^{-r/2} \\
		&\leq 2^r p^{-r/2}\leq 2^{-\Omega(r\log p)}.
	\end{aligned}
\]

Therefore, if $r\geq 0.1LK$, we have $\mu(R)\leq \mu_x(X)\leq 2^{-\Omega(r\log p)}\leq 2^{-\Omega(LK\log p)}$. 

Otherwise, $r\leq 0.1LK$. In this case, we are going to show that, it is impossible to pack too many vectors the the form of Bob's inputs into any subspace of small dimension. In particular, we will upper bound $\mu_y(Y)$, the measure of Bob's \es{} inputs, when their span has dimension $r$. Fix a basis of the span of vectors in $Y$, consisting of $r$ vectors in $\mathbb{F}_p^{LK}$.\footnote{These vectors do not have to be from Bob's inputs.} Without loss of generality, we can assume that for each basis vector, there is a coordinate in which this vector has value 1, and all other basis vectors have value 0, because we can always run a standard Gaussian elimination to transform the basis into this form. Let $T$ be this set of $r$ coordinates. As each vector in the subspace is a linear combination of the basis, fixing the values in coordinates in $T$ uniquely determines a vector in the subspace. By definition of \es{}, and $|T|=r\leq 0.1LK$, each $L$-tuple $y_i\in Y$ can have at most $0.9LK$ 1's in $T$. For all $LK$ blocks in the $L$ vectors, there are ${LK\choose \leq 0.9LK}$ choices for the set of blocks with ones. Moreover, each $y_i$ can have at most one 1 in each block. If a vector has a 1 in a block, there will be at most $L$ different choices to place the 1. Otherwise, we must set all coordinates in $T$ in that block to be $0$. After fixing values all coordinates in $T$, there can be at most one such vector in the subspace with matching values. Thus, we have
\[
	\begin{aligned}
	\mu_y(Y)&\leq {LK\choose \leq 0.9LK}\cdot L^{0.9LK}\cdot L^{-LK}\\
	&\leq L^{-0.1LK}\cdot 2^{LK}\leq 2^{-\Omega(LK\log L)}.
	\end{aligned}
\]

Therefore, in this case, we have $\mu(R)\leq \mu_y(Y)\leq 2^{-\Omega(LK\log L)}$. Combining both cases, we conclude that $\mu(R)\leq \max\{2^{-\Omega(LK\log L)},2^{-\Omega(LK\log p)}\}$. 
\end{proof}

Using the above lemma, we can prove communication lower bound for \mi{}.

\begin{proof}[Proof of Lemma~\ref{lowermi}]
Fix a protocol that solves the \mi{} problem with error 0.15 and cost $C$ in \mm{}. As we are working with a fixed input distribution, randomness in the protocol shall not help. In particular, by Markov's inequality and union bound, there is a way to fix the internal (public) randomness of the protocol, such that the success probability is at least $80\%$ and communication cost no more than $5C$. From now on, let us assume the protocol is deterministic, and success probability is at least $80\%$. 

For some message $z$ sent by Merlin, let $T_z=Z^{-1}(z)$ be set of the input pairs $(x,y)$ such that $z$ is message Merlin is supposed to send when the players get these input pairs. As Alice and Bob are always able to decide whether $z=Z(x, y)$, the classical combinatorial rectangle argument induces a way to partition each $T_z$ into rectangles based on the transcript between Alice and Bob. Moreover, the set $\{T_z\}_{z\in\{0,1\}^*}$ induces a partition of all possible input pairs. Therefore, provided that Merlin tells the truth, the entire transcript $\gamma(x, y)$, which includes both Merlin's message and the transcript between Alice and Bob, induces a partition of the matrix $M(f)$ into combinatorial rectangles $\{R_i\}$. For each $R_i$ with transcript $\gamma_i$, the players will return a fixed answer $v_i$ for every pair of inputs in the rectangle. 

By the definition of communication cost, we have $\E_{(x,y)\sim \mu}[|\gamma(x,y)|]\leq 5C$. Thus, by Markov's inequality
\begin{equation}\label{eqn1}
	\sum_{R_i:|\gamma_i|\leq 50C}\mu\left(R_i\setminus E\right)\geq \sum_{R_i:|\gamma_i|\leq 50C}\mu\left(R_i\right)-\mu(E)>0.9-0.05=0.85
\end{equation}

By the definition of probability of computing $f$ correctly, \[
\sum_{R_i}\mu(R_i\cap f^{-1}(v_i))\geq 0.8.
\]

By Markov's inequality, 
\[
\begin{aligned}
	&\sum_{R_i:\mu((R_i\setminus E)\cap f^{-1}(v_i))<0.7\mu(R_i\setminus E)}\mu(R_i\setminus E) \\
	&= \sum_{R_i:\mu(R_i\setminus E\setminus f^{-1}(v_i))>0.3\mu(R_i\setminus E)}\mu(R_i\setminus E) \\
	&\leq \frac{1}{0.3}\sum_{R_i:\mu(R_i\setminus E\setminus f^{-1}(v_i))>0.3\mu(R_i\setminus E)}\mu(R_i\setminus E\setminus f^{-1}(v_i)) \\
	&\leq \frac{1}{0.3}\sum_{R_i}\mu(R_i\setminus f^{-1}(v_i)) \\
	&=\frac{1}{0.3}\left(1-\sum_{R_i}\mu(R_i\cap f^{-1}(v_i))\right)\leq 2/3
\end{aligned}
\]

Thus, we have
\begin{equation}\label{eqn2}
	\sum_{R_i:\mu((R_i\setminus E)\cap f^{-1}(v_i))\geq 0.7\mu(R_i\setminus E)}\mu(R_i\setminus E)\geq 1-\mu(E)-2/3>0.25
\end{equation}

Let $\mathcal{R}$ be the (disjoint) union of all $R_i$ with value $v_i$ and transcript $\gamma_i$, such that 
\[
	\mu((R_i\setminus E)\cap f^{-1}(v_i))\geq 0.7\mu(R_i\setminus E)
\]
and $|\gamma_i|\leq 50C$. By (\ref{eqn1}) and (\ref{eqn2}), we have $\mu(\mathcal{R}\setminus E)\geq 0.1$. However, there can be only $2^{O(C)}$ different such $\gamma_i$'s, and thus $2^{O(C)}$ such rectangles. There must be some transcript $\gamma_i$ such that $\mu(R_i\setminus E)\geq 2^{-O(C)}$ and $R_i\setminus E$ is $0.7$-monochromatic rectangle under distribution $\mu$. Therefore, by Lemma~\ref{almonorect}, we have $C\geq \Omega(\min\{LK\log L, LK\log p\})$, which proves the lemma. 

\end{proof}

\section{Applications to Dynamic Graph Problems}\label{sectapp}
In this section, we present three applications of our main theorem to dynamic graph problems. See Appendix~\ref{sectcata} for formal definitions of the problems.

\begin{corollary}\label{cordyngraph}
For the following three dynamic graph problems:
\begin{enumerate}[(a)]
	\item
		dynamic \#SCC,
	\item
		dynamic planar s-t min-cost flow,
	\item
		dynamic weight s-t shortest path,
\end{enumerate}
any data structure with amortized expected update time $o(\log n)$, and error probability $\leq 5\%$ under polynomially many operations must have amortized expected query time $n^{1-o(1)}$. 
\end{corollary}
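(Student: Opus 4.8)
The plan is to reduce the \diu{} problem (via Corollary~\ref{coldiu}, or directly via Theorem~\ref{thmdiu}) to each of the three dynamic graph problems, so that a data structure for the graph problem with $o(\log n)$ update time and $5\%$ error would yield a \diu{} data structure with $o(\log n)$ amortized update time and $n^{1-o(1)}$ query time, contradicting Corollary~\ref{coldiu}. The key point is that all three reductions must be \emph{local}: one operation of \diu{} should translate into $O(1)$ (or at most polylogarithmic) operations of the graph problem, so that update/query times are preserved up to constant factors, and the total number of operations stays polynomial in $n$.

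First I would handle dynamic \#SCC. The idea is to build a fixed ``scaffold'' graph on $\Theta(n)$ vertices arranged along the line $[0,n]$, with a directed path $v_0 \to v_1 \to \cdots \to v_n$, plus for each $k$ a ``back-edge gadget'' $v_k \to v_{k-1}$ that is present exactly when coordinate $k$ is covered by the current union of intervals. An interval $[a,b]$ then corresponds to forcing the edges $v_{a+1}\to v_a,\dots,v_b\to v_{b-1}$ to be ``on'', but since intervals form a multiset and overlap, we cannot just toggle edges per insertion. The standard trick is to implement each \diu{} insertion/deletion as a batch of interval-endpoint updates through a segment-tree-style decomposition, so that $O(\log n)$ graph-edge updates suffice; alternatively, one observes that \diu{} itself can be reduced (losing only constants) to the version where updates are single-coordinate increments/decrements of a ``coverage count'' array and queries ask how many coordinates are positive — and this array-of-counters version maps cleanly onto turning the back-edges on and off. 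The number of strongly connected components is then $n+1$ minus the number of covered coordinates (each maximal covered run merges into one SCC with its neighbours along the forward path), so a \#SCC query answers the \diu{} query after an $O(1)$ arithmetic adjustment. I would spell out the gadget so that each coverage-count update changes $O(1)$ edges and each \diu{} query is $O(1)$ \#SCC queries.

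For dynamic planar s-t min-cost flow and dynamic weighted s-t shortest path, I would use essentially the same scaffold but reinterpret coverage as edge weights/capacities. For shortest path, place two parallel tracks from $s$ to $t$, one ``cheap'' track whose $k$-th edge has cost $0$ when coordinate $k$ is uncovered and cost $1$ when covered, forcing detours; arrange the geometry so the $s$-$t$ distance equals (length of $[0,n]$) minus (measure of the union), again up to an $O(1)$ additive/multiplicative normalization, with each coverage update touching $O(1)$ edge weights. For planar min-cost flow, route one unit (or $n$ units) of flow through a planar ladder where covered coordinates correspond to saturated or expensive rungs, so the min cost encodes the union measure; planarity is maintained because the scaffold is a ladder drawn in the plane and coverage only changes rung parameters, not the combinatorial embedding. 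In all three cases the reduction preserves amortized time up to constants and uses polynomially many operations, so Corollary~\ref{coldiu}'s dichotomy ($t_u=o(\log n)\Rightarrow t_q=n^{1-o(1)}$) transfers verbatim.

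The main obstacle I expect is the mismatch between \diu{}'s multiset-of-overlapping-intervals updates and the graph problems' natural per-edge updates: a single \verb+insert(a,b)+ can affect $\Theta(n)$ coordinates, so a naive reduction blows up the update time by a factor of $n$. Resolving this cleanly requires either (i) a preliminary reduction showing the $\Omega(\epsilon^2 n^{1-\epsilon}\log n)$ lower bound already holds for a restricted ``unit-interval / coverage-counter'' variant of \diu{} whose updates are $O(1)$-local — which should follow by inspecting the hard distribution $\mathcal{D}$ and the \bps{}-to-\diu{} reduction, since there the intervals used are highly structured — or (ii) absorbing an $O(\log n)$ factor via a segment-tree decomposition, which is fine for the \emph{qualitative} statement ``$o(\log n)$ update $\Rightarrow$ $n^{1-o(1)}$ query'' but would need care for the exact trade-off. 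I would take route (i) and state the \diu{} lower bound in the form most convenient for each gadget; verifying that the structured intervals in $\mathcal{D}$ remain hard under this restriction is the one step that needs genuine checking rather than routine gadgeteering.
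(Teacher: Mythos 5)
There is a genuine gap, and it is exactly the one you flag at the end: your gadgets encode \emph{per-coordinate coverage}, so a single \verb+insert(a,b)+ touches $\Theta(b-a)$ graph edges, and neither of your proposed fixes closes the hole. Route (ii) (segment-tree decomposition) costs a $\Theta(\log n)$ factor in update time, so an $o(\log n)$-update graph structure only yields an $o(\log^2 n)$-update \diu{} structure; Corollary~\ref{coldiu} says nothing in that regime (the bound $t_q\geq t_u n^{1-\sqrt{t_u/\log n}}$ is vacuous once $t_u=\Omega(\log n)$), so even the qualitative statement does not follow. Route (i) (restricting to a ``unit-interval / coverage-counter'' variant) does not follow ``by inspecting the hard distribution'': the intervals produced by the \bps{}-to-\diu{} reduction have lengths up to $p=\Theta(n^{\epsilon})$ for updates and up to $Bp=\Theta(n^{2\epsilon})$ for the temporary query intervals, so they are very far from $O(1)$-local, and you would have to prove a new lower bound from scratch for the restricted variant.

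The paper's reductions avoid the issue entirely by keying each gadget on the interval's two \emph{endpoints} rather than on the coordinates it covers, so every \diu{} operation becomes $O(1)$ graph operations with no restriction on the instance. For \#SCC, the scaffold is the forward path $0\to 1\to\cdots\to n$ and each interval $[a,b]$ contributes a \emph{single} back edge $b\to a$; overlapping intervals' cycles share vertices and hence merge into one SCC per maximal run of the union, giving $\#\mathrm{SCC}=n+1-(\text{length of union})$. For shortest path, each interval contributes a single weight-$0$ edge $a\to b$ (with weight-$1$ forward and weight-$0$ backward edges on the path), and for planar min-cost flow each insertion/deletion increments/decrements only the capacities of the two edges $s\to a$ and $b\to t$. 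In each case the graph query's global computation (SCC decomposition, shortest path, min-cost flow) performs the union aggregation for you. Your overall strategy (reduce from \diu{} and invoke Corollary~\ref{coldiu}) is the right one, but without the endpoint-based gadgets the reduction does not preserve update time and the proof does not go through.
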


\begin{proof}(sketch)
To prove the lower bounds, we are going to give three reductions from \diu{}. The corollary follows from Corollary~\ref{coldiu}. 
\begin{enumerate}[(a)]
	\item To solve \diu{}, we maintain the following graph $G$: $G$ has a Hamiltonian path $0\rightarrow 1\rightarrow \cdots \rightarrow n$; for every $[a, b]\in \mathcal{I}$, $G$ has an edge $b\rightarrow a$. It is not hard to see that the graph can be maintained efficiently given a dynamic \#SCC data structure, and the total length of the union of $\mathcal{I}$ is exactly $n+1$ minus the number of strongly connected components in $G$. See Figure~\ref{figscc}. 
	
\begin{figure}
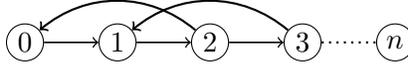

\centering
\tikz[node distance=35pt]{
	\tikzstyle{vertex}=[circle,draw,minimum size=14pt, inner sep=0];
	\node[vertex] (v0) {$0$};
	\node[vertex, right of=v0] (v1) {$1$};
	\node[vertex, right of=v1] (v2) {$2$};
	\node[vertex, right of=v2] (v3) {$3$};
	\node[vertex, right of=v3] (v4) {$n$};
	\draw[->,semithick] (v0) -> (v1);
	\draw[->,semithick] (v1) -> (v2);
	\draw[->,semithick] (v2) -> (v3);
	\draw[dotted, thick] (v3) -- (v4);
	\draw[->,thick] (v2) edge [bend right=40] (v0);
	\draw[->,thick] (v3) edge [bend right=40] (v1);
}
\caption{dynamic \#SCC}\label{figscc}
\end{figure}

	\item
	The underlying graph is shown as Figure~\ref{figflow}. The edges connecting vertices $i$ and $i+1$ have infinite capacities and zero cost. The edges connect $i$ and $i'$ have capacities 1 and cost -1. The edges connecting $i'$ and $i+1$ have capacities 1 and cost 0. All edges connecting to $s$ or $t$ have cost 0. The only values that may change are the capacities of edges connecting to $s$ or $t$. More specifically, we maintain the graph such that the capacity from $s$ to vertex $i$ always equals to the number of intervals in $\mathcal{I}$ with left endpoint $i$, the capacity from $i$ to $t$ always equals to the number of intervals with right endpoint $i$. It is easy to verify that given a dynamic planar s-t min-cost flow data structure, we can efficiently maintain these invariants. To query the total length of the union of $\mathcal{I}$, we query the min-cost flow in $G$ with flow value $|\mathcal{I}|$. For each $i$, the amount of flow from $i$ to $i+1$ (also counting flow going through $i'$) is exactly the number of intervals containing $[i, i+1]$. As $i\rightarrow i'\rightarrow i+1$ has a smaller cost than going to $i+1$ directly from $i$, min-cost flow will prefer to use the former path. Each $[i,i+1]$ contained in any intervals in $\mathcal{I}$ adds a cost of $-1$ to the flow. Therefore, the negate of the cost is exactly the length of the union. 
	
\begin{figure}
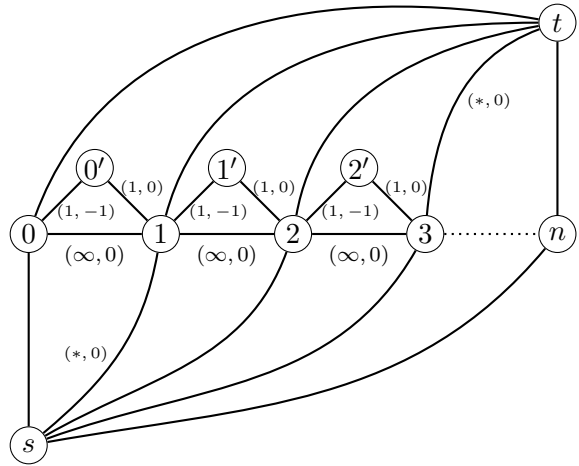

\centering
\tikz[node distance=50pt]{
	\tikzstyle{vertex}=[circle,draw,minimum size=14pt, inner sep=0];
	\node[vertex] (v0) {$0$};
	\node[vertex, right of=v0] (v1) {$1$};
	\node[vertex, right of=v1] (v2) {$2$};
	\node[vertex, right of=v2] (v3) {$3$};
	\node[vertex, right of=v3] (v4) {$n$};
	\node[vertex, below of=v0, node distance=80pt] (vs) {$s$};
	\node[vertex, above of=v4, node distance=80pt] (vt) {$t$};
	\node[vertex, above of=v0, xshift=25pt, node distance=25pt] (v0') {$0'$};
	\node[vertex, above of=v1, xshift=25pt, node distance=25pt] (v1') {$1'$};
	\node[vertex, above of=v2, xshift=25pt, node distance=25pt] (v2') {$2'$};
	\draw[thick] (vs) -- (v0) -- 
	node[below] {\scriptsize $(\infty,0)$} (v1) -- 
	node[below] {\scriptsize $(\infty,0)$} (v2) -- 
	node[below] {\scriptsize $(\infty,0)$} (v3);
	\draw[thick] (v4) -- (vt);
	\draw[thick] (v0) --
	node[right,xshift=-6pt, yshift=-4pt] {\tiny $(1, -1)$} (v0') -- 
	node[right,xshift=-7pt, yshift=5pt] {\tiny $(1,0)$} (v1) -- 
	node[right,xshift=-6pt, yshift=-4pt] {\tiny $(1, -1)$} (v1') -- 
	node[right,xshift=-7pt, yshift=5pt] {\tiny $(1,0)$} (v2) -- 
	node[right,xshift=-6pt, yshift=-4pt] {\tiny $(1, -1)$} (v2') -- 
	node[right,xshift=-7pt, yshift=5pt] {\tiny $(1,0)$} (v3);
	\draw[dotted, thick] (v3) -- (v4);
	\path (v0) edge [thick, out=70, in=170] (vt);
	\path (v1) edge [thick, out=75, in=180] (vt);
	\path (v2) edge [thick, out=80, in=190] (vt);
	\path (v3) edge [thick, out=85, in=200] node[right] {\tiny $(*, 0)$} (vt);
	\path (vs) edge [thick, out=40, in=-100] node[left] {\tiny $(*, 0)$} (v1);
	\path (vs) edge [thick, out=30, in=-110] (v2);
	\path (vs) edge [thick, out=20, in=-120] (v3);
	\path (vs) edge [thick, out=10, in=-130] (v4);
}
\caption{dynamic planar s-t min-cost flow}\label{figflow}
\end{figure}	
	
	\item 
	
	We maintain a graph $G$ such that there is an edge from $s$ to $0$ with weight 0, an edge from $n$ to $t$ with weight $0$, edges from $i$ to $i+1$ with weight 1, and edges from $i+1$ to $i$ with weight $0$. Moreover, for each interval $[a, b]\in\mathcal{I}$, the graph has an edge $a\rightarrow b$ with weight 0. The shortest path from $s$ to $t$ is exactly the length of the union, because for $[i, i+1]$ contained in any interval in $\mathcal{I}$, we can go from $i$ to $i+1$ with zero cost: go to the left endpoint of the interval, then go to the right endpoint using one edge, and go to $i+1$. See Figure~\ref{figsp}. 
	
\begin{figure}
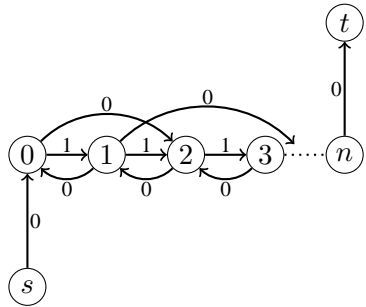

\centering
\tikz[node distance=30pt]{
	\tikzstyle{vertex}=[circle,draw,minimum size=14pt, inner sep=0];
	\node[vertex] (v0) {$0$};
	\node[vertex, right of=v0] (v1) {$1$};
	\node[vertex, right of=v1] (v2) {$2$};
	\node[vertex, right of=v2] (v3) {$3$};
	\node[vertex, right of=v3] (v4) {$n$};
	\node[vertex, below of=v0, node distance=50pt] (vs) {$s$};
	\node[vertex, above of=v4, node distance=50pt] (vt) {$t$};
	\draw [thick,->] (vs) -> node[right,xshift=-3pt] {\scriptsize 0} (v0);
	\draw [thick,->] (v0) -> node[above,yshift=-3pt] {\scriptsize 1} (v1);
	\draw [thick,->] (v1) -> node[above,yshift=-3pt] {\scriptsize 1} (v2);
	\draw [thick,->] (v2) -> node[above,yshift=-3pt] {\scriptsize 1} (v3);
	\draw [thick,dotted] (v3) -- node (v35){} (v4);
	\draw [thick,->] (v4) -> node[left,xshift=3pt] {\scriptsize 0} (vt);
	\path (v1) edge [draw, thick,->,bend left=45] node[below,yshift=3pt] {\scriptsize 0} (v0);
	\path (v2) edge [draw, thick,->,bend left=45] node[below,yshift=3pt] {\scriptsize 0} (v1);
	\path (v3) edge [draw, thick,->,bend left=45] node[below,yshift=3pt] {\scriptsize 0} (v2);
	\path (v0) edge [draw, thick,->,bend left=45] node[above,yshift=-3pt] {\scriptsize 0} (v2);
	\path (v1) edge [draw, thick,->,bend left=45] node[above,yshift=-3pt] {\scriptsize 0} (v35);
}
\caption{dynamic weight s-t shortest path}\label{figsp}
\end{figure}
\end{enumerate}
\end{proof}
}

\section{Final Remarks}\label{sectremark}
In~\cite{CGL15}, Clifford, Gr\o{}nlund and Larsen mentioned a $\log m \log n$ barrier for dynamic data structure lower bounds, where $m$ is the number of different queries (including parameters), and $n$ is the number operations in the sequence we analyse. In some sense, our main result can also be viewed as a ``$(\log m\log n)$-type'' lower bound, as the query takes only $O(1)$ bits to describe. The way we prove this type of lower bound is very different from~\cite{Larsen12a, Larsen12b, CGL15}. We obtain this kind of the lower bound via reduction. Assume we want to prove a lower bound for problem $A$. We first prove a lower bound for some other problem $B$ with large $m$, and get a high lower bound. Then we find a way to implement updates of $B$ using updates of $A$, and queries of $B$ using updates and queries of $A$, and thus derive a lower bound for $A$. Note that it is important that we implement queries in problem $B$ using \emph{both} updates and queries in the original problem. Because if we only use queries to implement queries, in order to keep all the information in the query of $B$, it has to be decomposed into many queries of $A$. A simple calculation shows that in this case, we cannot break the barrier for problem $A$ unless we have already broken it for problem $B$. However, if we use both updates and queries of $A$, it is possible to ``hide'' information in the updates. A good example is the reduction in Proposition~\ref{redps} in Appendix~\ref{sectredps}. However, using this approach, we still cannot beat $\log m'\log n$, where $m'$ is the number of different \emph{updates} (including parameters). Nevertheless, it gives us a potential way to break the $\log m\log n$ barrier for problems with $m'>m$ if we can combine it with the previous techniques. 

P\v{a}tra\c{s}cu has used the communication lower bound for lopsided set disjointness, set disjointness problem of a special form, to prove a collection of (static) data structure lower bounds~\cite{Pat08, Pat11}. In this paper, we applied communication \emph{protocol (upper bound)} for sparse set disjointness, set disjointness of a different special form, to prove a dynamic data structure \emph{lower bound}. In some sense, this can be viewed as an analogue of the recent development in duality between algorithm and complexity~\cite{Williams13, Williams14, AWY15} in the communication complexity and data structure world. It would be interesting to see examples where both communication lower bound and upper bound for the exact same problem can be used to prove data structure lower bounds.

On Klee's measure problem, our result is an unconditional lower bound for one certain type of algorithms. From Theorem~\ref{thmdiu}, we can generate a hard input distribution for the sweep-line algorithm, such that if the algorithm only sorts the rectangles, goes through the entire area row by row and computes the number of grids in the union only based on the rectangles intersecting the current row or previous rows, then it cannot beat Bentley's algorithm. However, our hard distribution is not very robust, in the sense that if we do the sweep-line from a different direct, the distribution over inputs becomes really easy. At least, it still shows us what an $o(N\log N)$ time algorithm for computing 2D Klee's measure problem on $[0,N^{2/3}]\times [0,N^{2/3}]$ should \emph{not} look like, if exists. 

\paragraph*{Acknowledgement.}
The author would like to thank Yuqing Ai and Jian Li for introducing Klee's measure problem to me during a discussion, and would like to thank Timothy Chan for telling me the state-of-the-art.

 The author also wishes to thank Ryan Williams for helpful discussions on applications to dynamic graph problems and in paper-writing.

\bibliographystyle{plain}
\bibliography{dynintun}

\appendix

\section{Reduction from Partial Sum}\label{sectredps}
The partial sum problem is to maintain a sequence of $n$ numbers $(A_i)$ over $[n]$, supporting the following operations:
\begin{itemize}
	\item
		\verb+update(i, v)+: set $A_i$ to value $v$;
	\item
		\verb+query(l)+: return $\sum_{i\leq l}A_i$.
\end{itemize}

P\v{a}tra\c{s}cu and Demaine~\cite{PD04a} showed that at least one of the operations needs to take $\Omega(\log n)$ time in the cell-probe model with word size $w=\Theta(\log n)$. 

\begin{proposition}\label{redps}
Any data structure for the \diu{} problem with insertion time $t_i$, deletion time $t_d$ and query time $t_q$ must have $\max\{t_i,t_d,t_q\}\geq \Omega(\log n)$ in the cell-probe model with word size $w=\Theta(\log n)$. 
\end{proposition}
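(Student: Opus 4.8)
The plan is to reduce the partial sum problem to the \diu{} problem, mimicking the (more elaborate) reduction used in the proof of Theorem~\ref{thmdiu}, but in the simplest possible single-sequence form. Since P\v{a}tra\c{s}cu and Demaine~\cite{PD04a} proved that any data structure for partial sum on a length-$n$ sequence with entries in $[n]$ needs $\max\{t_u, t_q\} = \Omega(\log n)$ in the cell-probe model with $w = \Theta(\log n)$, it suffices to show that a \diu{} data structure over $[0, n]$ with insertion, deletion, and query times $t_i, t_d, t_q$ yields a partial sum data structure over a sequence of length $\Theta(n/\log n)$ with entries in a range of size $\Theta(\log n)$, with update time $O(t_i + t_d)$ and query time $O(t_q + (n/\log n)\cdot(t_i+t_d))$ — or, more cleanly, just rescale so the partial sum instance has length $m = \Theta(\sqrt{n})$ and value range $[\sqrt{n}]$, which is still enough for the $\Omega(\log n) = \Omega(\log m)$ bound.

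The key steps, in order: First, given a target partial sum instance on a sequence $(A_1, \dots, A_m)$ with $A_i \in \{0, 1, \dots, p-1\}$ where $mp \le n$, associate to entry $i$ the segment $[(i-1)p, ip]$ of $[0,n]$. Maintain the invariant that if $A_i$ has current value $v \ne 0$ then exactly one interval $[(i-1)p, (i-1)p + v]$ of $\mathcal{I}$ lies in this segment, and no interval lies there if $v = 0$. Second, implement \verb+update(i, v')+: look up the old value $v$ (store it in a side array, which costs nothing extra in the cell-probe accounting since we only charge \diu{} probes — or read it off with $O(1)$ probes), call \verb+delete+$((i-1)p, (i-1)p+v)$ if $v \ne 0$, then \verb+insert+$((i-1)p, (i-1)p+v')$ if $v' \ne 0$; this costs $O(t_i + t_d)$ per update. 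Third, implement \verb+query(l)+: temporarily insert the "filler" interval $[lp, mp]$ (one insertion), which forces everything to the right of position $l$ to be fully covered regardless of its contents; call \verb+query()+ on the \diu{} structure; the returned total length, taken modulo $p$, equals $\sum_{i \le l} A_i \bmod p$ (the left part contributes $\sum_{i\le l} A_i$ exactly, the right part contributes a multiple of $p$, since each full segment of length $p$ is entirely covered); finally delete the filler interval. Choosing $p$ so that $\sum_i A_i < p$ — e.g. $p = m+1$ and entries in $\{0,1\}$, or $p = \Theta(\sqrt n)$ and $m = \Theta(\sqrt n)$ with entries bounded so the true prefix sums never wrap — makes the "modulo $p$" recover the exact answer. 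Fourth, conclude: the resulting partial sum data structure has $t'_u = O(t_i + t_d)$ and $t'_q = O(t_q + t_i + t_d)$, so by~\cite{PD04a}, $\max\{t_i + t_d, t_q + t_i + t_d\} = \Omega(\log m) = \Omega(\log n)$, hence $\max\{t_i, t_d, t_q\} = \Omega(\log n)$.

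The main obstacle — really the only subtle point — is ensuring the "query" reduction is faithful: we must guarantee that the contribution of the untouched right-hand segments to the union length is always an exact multiple of $p$, and that the left-hand segments contribute exactly the prefix sum with no overlap or truncation. This is where the invariant "at most one interval per segment, anchored at the left endpoint of the segment, of length equal to the entry value $< p$" is essential: anchoring at the left endpoint means two intervals in the same segment would be nested, so the multiset structure of $\mathcal{I}$ (allowing repeated intervals) is harmless, and length $< p$ means an interval never spills into the neighboring segment. The filler interval $[lp, mp]$ covers segments $l+1, \dots, m$ completely, contributing exactly $(m-l)p$, a multiple of $p$; one should double-check the boundary behavior at integer points (the segments share endpoints like $ip$, but a single shared point has measure zero and does not affect the length of the union). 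Everything else — number of operations, word size, the passage from the P\v{a}tra\c{s}cu--Demaine bound to the claimed $\Omega(\log n)$ — is routine bookkeeping.
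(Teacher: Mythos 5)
Your reduction is essentially identical to the paper's proof of Proposition~\ref{redps}: blocks of length $p$, one left-anchored interval per block encoding the entry value, delete-then-insert for updates, and a temporary filler interval $[lp,mp]$ for queries. The one wrinkle is your ``reduce mod $p$'' recovery step: with $m=p=\Theta(\sqrt n)$ and entries in $[\sqrt n]$, prefix sums reach $\Theta(n)\gg p$, so mod-$p$ wraps, and shrinking the entries to avoid wraparound would weaken the partial-sum lower bound you invoke (for $\{0,1\}$ entries only $\Omega(\log n/\log\log n)$ is available). The fix is already in your own write-up: you note the filler contributes exactly $(m-l)p$, so simply subtract that known quantity from the returned union length (as the paper does) instead of reducing mod $p$; with that change the argument goes through verbatim.
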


\begin{proof}
Consider the partial sum problem with $\sqrt{n}$ numbers over $[\sqrt{n}]$. Any data structure will take $\Omega(\log n)$ time to update or query. Fix a \diu{} data structure over $[0, n]$, we will use it to solve the partial sum problem. First partition $[0, n]$ into $\sqrt{n}$ blocks of length $\sqrt{n}$ each. The $i$-th block will correspond to number $A_i$. We maintain the invariant that for each number $A_i$, there is an interval in the corresponding block of length equal to the value of $A_i$. 

More specifically, every time we need to update $A_i$ from value $v'$ to $v$, we delete the previous interval $[(i-1)\sqrt{n},(i-1)\sqrt{n}+v']$, then insert a new interval $[(i-1)\sqrt{n},(i-1)\sqrt{n}+v]$. When we need to query the sum of first $l$ numbers, we first insert an interval $[l\sqrt{n},n]$, which covers all blocks from the $(l+1)$-th to the last, then query the length of the union, delete the interval inserted earlier. We know that the temporarily inserted interval has length $(\sqrt{n}-l)\sqrt{n}$. Subtracting it from the answer returned, we get the total length of intervals in the first $l$ blocks, which is exactly the sum of first $l$ numbers.

Every update of partial sum can be implemented using an insertion and a deletion of \diu{}, every query can be implemented using an insertion, a deletion and a query. Therefore, at least one of the operations has to take $\Omega(\log n)$ time. 
\end{proof}

\section{Catalogue of Dynamic Graph Problems in Our Application}\label{sectcata}

The \emph{dynamic \#SCC} problem is to maintain a directed graph $G$, supporting:
\begin{itemize}
	\item
		\verb+insert(u, v)+: insert an edge $(u, v)$;
	\item
		\verb+delete(u, v)+: delete an (existing) edge $(u, v)$;
	\item
		\verb+query()+: return the number of strongly connected components in $G$.
\end{itemize}

The \emph{dynamic planar s-t min-cost flow} problem is to maintain an undirected planar flow network $G$ with edge cost, supporting:
\begin{itemize}
	\item
		\verb+update(u, v, cap)+: update the capacity of (an existing) edge $(u, v)$ to \verb+cap+;
	\item
		\verb+query(f)+: return the min-cost flow from a fixed source $s$ to a fixed sink $t$ with flow value $f$. 
\end{itemize}

The \emph{dynamic weighted s-t shortest path} problem is to maintain a weighted directed graph $G$, supporting:
\begin{itemize}
	\item
		\verb+insert(u, v, w)+: insert an edge $(u, v)$ with weight $w$;
	\item
		\verb+delete(u, v)+: delete an (existing) edge $(u, v)$;
	\item
		\verb+query()+: return the shortest path from a fixed source $s$ to a fixed target $t$.
\end{itemize}

\iftoggle{conf}{
\section{Preliminaries on Cell-probe Model}\label{sectcp_app}
\cellprobepre

\section{Preliminaries on Communication Complexity}\label{sectpre_app}
\sectprecomm

\section{Full Version}
}{}

\end{document}